\newcommand{\para}[1]{\vspace{1mm}\noindent\textbf{#1}}
\newcommand\method{\textsc{FeatNavigator}\xspace}
\begin{document}
\title{\method: Automatic Feature Augmentation on Tabular Data}

%%
%% The "author" command and its associated commands are used to define the authors and their affiliations.
\author{Jiaming Liang$^{1*}$, Chuan Lei$^2$, Xiao Qin$^2$, Jiani Zhang$^2$,\\ Asterios Katsifodimos$^{2,3}$, Christos Faloutsos$^{2,4}$, Huzefa Rangwala$^2$}
% \thanks{*Work done at AWS AI Research \& Education}
\vspace{3pt}
\affiliation{
\institution{$^1$University of Pennsylvania, $^2$AWS AI Research \& Education, $^3$TU Delft, $^4$Carnegie Mellon University\\
liangjm@seas.upenn.edu, \{chuanlei, drxqin, zhajiani, akatsifo, faloutso, rhuzefa\}@amazon.com}
}

%%
%% The abstract is a short summary of the work to be presented in the
%% article.
\begin{abstract}
% \chuan{we need to create a technical report to hold some of the proofs and algos, if we run out of space...}
Data-centric AI focuses on understanding and utilizing high-quality, relevant data in training machine learning (ML) models, thereby increasing the likelihood of producing accurate and useful results. Automatic feature augmentation, aiming to augment the initial base table with useful features from other tables, is critical in data preparation as it improves model performance, robustness, and generalizability. While recent works have investigated automatic feature augmentation, most of them have limited capabilities in utilizing all useful features as many of them are in candidate tables not directly joinable with the base table. Worse yet, with numerous join paths leading to these distant features, existing solutions fail to fully exploit them within a reasonable compute budget.

We present \method, an effective and efficient framework that explores and integrates high-quality features in relational tables for ML models. \method evaluates a feature from two aspects: (1) the intrinsic value of a feature towards an ML task (i.e., feature importance) and (2) the efficacy of a join path connecting the feature to the base table (i.e., integration quality). \method strategically selects a small set of available features and their corresponding join paths to train a feature importance estimation model and an integration quality prediction model. Furthermore, \method's search algorithm exploits both estimated feature importance and integration quality to identify the optimized feature augmentation plan. Our experimental results show that \method outperforms state-of-the-art solutions on five public datasets by up to 40.1\% in ML model performance.
\end{abstract}

\maketitle

%%% do not modify the following VLDB block %%
%%% VLDB block start %%%
% \pagestyle{\vldbpagestyle}
% \begingroup\small\noindent\raggedright\textbf{PVLDB Reference Format:}\\
% \vldbauthors. \vldbtitle. PVLDB, \vldbvolume(\vldbissue): \vldbpages, \vldbyear.\\
% \href{https://doi.org/\vldbdoi}{doi:\vldbdoi}
% \endgroup
% \begingroup
% \renewcommand\thefootnote{}\footnote{\noindent
% This work is licensed under the Creative Commons BY-NC-ND 4.0 International License. Visit \url{https://creativecommons.org/licenses/by-nc-nd/4.0/} to view a copy of this license. For any use beyond those covered by this license, obtain permission by emailing \href{mailto:info@vldb.org}{info@vldb.org}. Copyright is held by the owner/author(s). Publication rights licensed to the VLDB Endowment. \\
% \raggedright Proceedings of the VLDB Endowment, Vol. \vldbvolume, No. \vldbissue\ %
% ISSN 2150-8097. \\
% \href{https://doi.org/\vldbdoi}{doi:\vldbdoi} \\
% }\addtocounter{footnote}{-1}\endgroup
%%% VLDB block end %%%

%%% do not modify the following VLDB block %%
%%% VLDB block start %%%
% \ifdefempty{\vldbavailabilityurl}{}{
% \vspace{.3cm}
% \begingroup\small\noindent\raggedright\textbf{PVLDB Artifact Availability:}\\
% The source code, data, and/or other artifacts have been made available at \url{\vldbavailabilityurl}.
% \endgroup
% }
%%% VLDB block end %%%

\section{Introduction}
\label{sec:intro}

Feature augmentation is a crucial process that enhances machine learning (ML) model performance~\cite{dfs,arda,autof,metam}. The process of feature augmentation on tabular data typically involves feature exploration and integration. Feature exploration identifies useful features among candidate tables that could reveal additional insights or patterns. And feature integration decides how to join the selected candidate tables with the base table to augment the initial training data. Open tabular data repositories such as Google Dataset Search\footnote{\url{https://datasetsearch.research.google.com/}}, Kaggle\footnote{\url{https://www.kaggle.com/datasets}}, and OpenML\footnote{\url{https://www.openml.org/}} provide diverse tabular data that capture a wide range of features. By exploring and integrating features from these repositories of data, ML models can generalize to more patterns~\cite{DBLP:journals/jbd/ChenDHC20}, overcome overfitting~\cite{ying2019overview}, thereby leading to improved performance on unseen data~\cite{arda,autof,metam}.

% ==================================================
\begin{figure}
  \centering
  \includegraphics[width=\columnwidth]{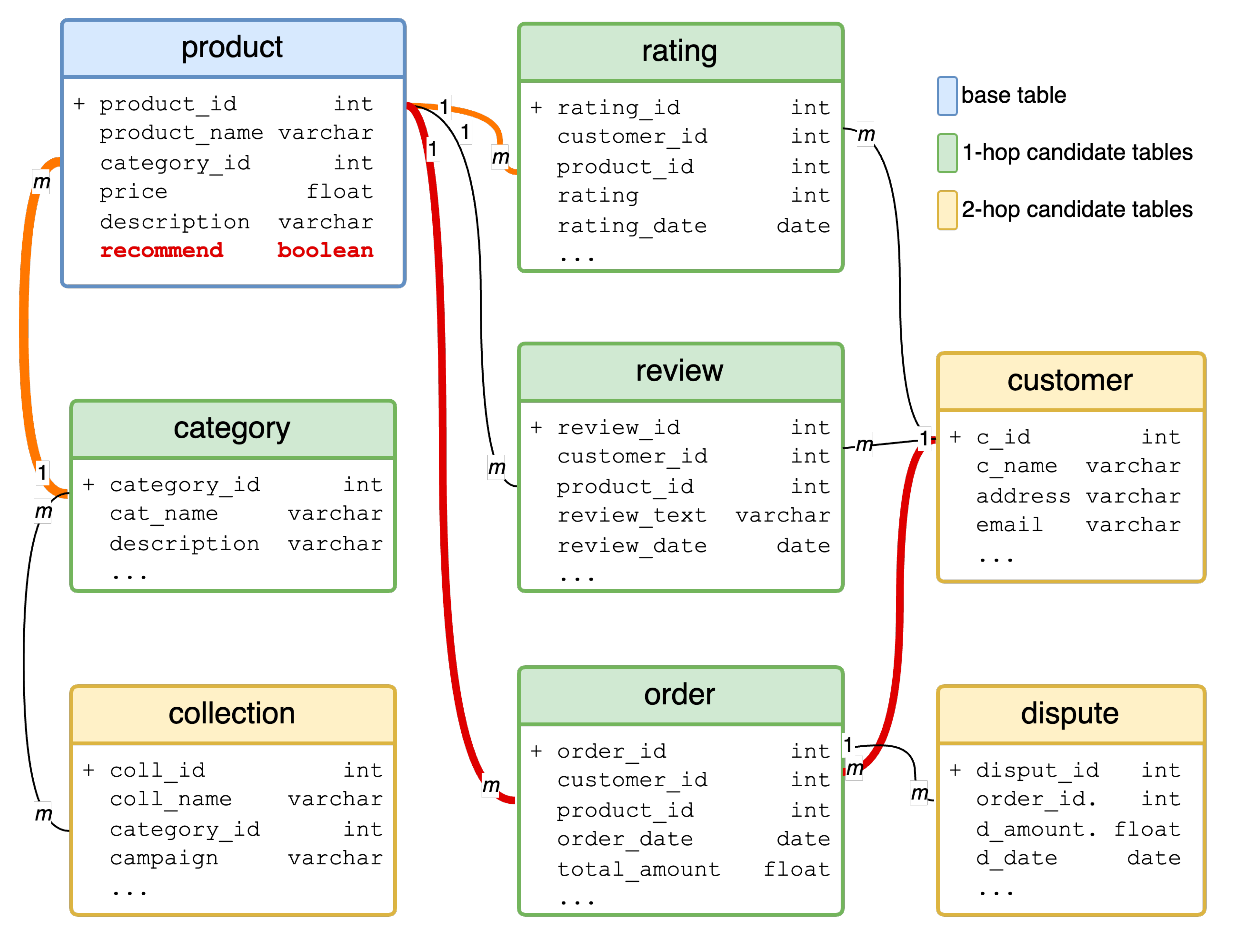}
  \caption{Example of feature augmentation. The schema of the base table, \textsf{product}, is in \textcolor{blue}{blue}. The task is to predict the values in the boolean type column \textcolor{red}{`recommend'}. The schemata in \textcolor{teal}{green} and \textcolor{yellow}{yellow} describe the tables that are 1-hop and 2-hop joinable with the base table.}
  \label{fig:motivation}
  \vspace{-10pt}
\end{figure}
% ==================================================

\para{Motivating Example.} Figure~\ref{fig:motivation} shows an example of feature augmentation on a toy dataset with a classification task. The base table is \textsf{product}, in which \textsf{recommend} is the target column, indicating whether the product should be recommended or not. The tables \textsf{category}, \textsf{rating}, \textsf{review}, and \textsf{order} are \textit{directly} joinable with \textsf{product}. And \textsf{collection}, \textsf{dispute} and \textsf{customer} are candidate tables \textit{indirectly} joinable with \textsf{product} through the other tables. Assuming that the features (e.g., \textsf{address}) from \textsf{customer} are useful to the prediction task, three join paths can be used to augment the base table \textsf{product}. However, these paths are not equivalent as customers may not provide ratings or reviews for all purchased products. In fact, the join path, \textsf{customer} $\bowtie$ \textsf{order} $\bowtie$ \textsf{product} (indicated by the thick red line), provides the highest quality to augment the base table. One can only discover that by actually joining these tables, which can be expensive when involving multiple large tables. When the prior knowledge about feature importance is missing, it becomes more challenging to examine all candidate tables, including both directly and indirectly joinable ones. Expensive joins need to be executed in order to find out the actual benefit of an augmented table to the prediction task. Worse yet, certain joins (e.g., \textsf{product} $\bowtie$ \textsf{category} and \textsf{product} $\bowtie$ \textsf{rating} indicated by the thick orange lines) are repeated multiple times when exploring different tables.

\para{State-of-the-art Approaches.} 
Early works~\cite{10.1145/2882903.2882952,arda} augment the base table by joining other tables based on either their relevancy to the base table or the improvement of the contained features bring to the ML models. However, they assume that the features and their respective relevance to the base table are specified. Hence they fail to leverage all possible candidate feature and consequently do not reach the optimal model performance. Recently, ALITE~\cite{DBLP:journals/pvldb/KhatiwadaSGM22} and DIALITE~\cite{DBLP:conf/sigmod/KhatiwadaSM23} are introduced to integrate tables based on joinability or unionability, which can improve the quality of downstream applications (e.g., ML models).

Reinforcement learning (RL)~\cite{DBLP:conf/ecml/VermorelM05,10.5555/3045390.3045601} has emerged as an effective approach to balance between feature exploration and exploitation. Several studies~\cite{autof,10.14778/3523210.3523223,metam} leverage RL techniques, such as multi-armed bandits and Deep Q Networks (DQNs), to explore the large search space of candidate tables. Although these RL-based methods have achieved promising results, they still face critical challenges. Specifically, METAM~\cite{metam} is limited to direct joinable candidate tables and AutoFeature~\cite{autof} does not provide fine-grained optimization on integration quality. Moreover, RL often relies on a large number of samples to learn effectively (DQNs in particular), and tends to overfit to the training environment, leading to poor performance in slightly different real-world situations.
% comes with its own disadvantages: (1) relying on a large number of samples to learn effectively (DQNs in particular); (2) overfitting to the training environment, leading to poor performance in slightly different real-world situations; and (3) convergence to an optimal policy not always guaranteed. 

\para{Challenges.} Based on the above observations, we aim to address the following open challenges.
% \vspace{-5pt}
\begin{enumerate}[leftmargin=*]
    \item Integration Quality. In real-world open data settings, a candidate table can be integrated with the base table via multiple join paths, which often have very different integration quality. Hence, it is critical to identify the optimal join path to achieve the full potential (i.e., feature importance) of a candidate table.
    \item Expensive Join Operations. Join operations are computationally expensive and time consuming, especially when a long join path involves multiple joins. Candidate tables from open data repositories often share similar join paths. Therefore, avoiding or sharing join operations is imperative to efficient feature augmentation.
    \item Search Complexity. The search space of all candidate tables along with all integration paths is intractable, making it prohibitively expensive to find the optimal set of candidate tables. To address this challenge, we need effective strategies to reduce the search space without compromising optimality.
\end{enumerate}
% \vspace{-5pt}

\para{Proposed Solution.} To cope with the above mentioned challenges, we propose \method, an efficient and effective framework for automatic feature augmentation. Given a base table, a set of candidate tables and a ML task, \method first effectively selects a subset of candidate tables using a lightweight clustering algorithm based on table characteristics (e.g., embeddings, statistics, semantics, etc.). \method further chooses representative join paths from these candidate tables to the base table. This allows \method to not only find out the actual performance gain led by the candidate tables following the join paths but also learn an integration quality model and a feature importance estimator to calculate the performance gain of any unexplored candidate tables and their respective join paths. Lastly, \method's path search algorithm utilizes the integration quality model and feature importance estimator to further prune the search space and finds the optimized integration paths for the selected candidate tables in polynomial time. In case of having new tables added or table updates to the data repository, \method's integration quality model and feature importance estimator can effectively and efficiently estimate the performance gain without retraining. In polynomial time, \method's search algorithm produces new optimized integration paths if the estimated performance gain is higher than the one of the existing paths.

% \chuan{Push away this concern in intro - one potential concern arises from the necessity to train a distinct model for each base table or base feature sourced from the data lake. This gives rise to two potential issues. Firstly, for every additional table within the same data lake, a new model needs to be trained, introducing a time-consuming process. Secondly, in the case of table updates, e.g., inserting samples or deleting samples, the requirement to train a new model emerges. A more in-depth analysis of this limitation would enhance the comprehension and utility of the proposed method.}

\para{Contributions.} Our main contributions are the following.

% \vspace{-2pt}
\begin{enumerate}[leftmargin=*]
    \item We introduce \method, a novel feature augmentation framework designed for ML tasks over tabular data repositories.
    \item We decompose the intractable problem of feature augmentation into two sub-problems, feature importance and integration quality, and consequently prove the relationship between them.
    \item We design an effective feature exploration module consisting of a lightweight clustering-based feature importance estimation and an effective LSTM-based integration quality model.
    \item We propose an efficient feature path search algorithm that exploits both estimated feature importance and integration quality to identify high-quality features and their optimized join paths for integration.
    \item Our experiments show that \method outperforms state-of-the-art baselines on different ML tasks over five public datasets with up to 40.1\% improvement ML model performance.
\end{enumerate}
% \vspace{-5pt}

% \para{Outline.} 
\section{Preliminaries and System Overview}
\label{sec:prel}

In this section, we first discuss the notions used in this paper, formally define the problem of feature augmentation and then provide an overview of \method.

% \vspace{-5pt}
\subsection{Basic Notions}
\label{sec:prel:notions}

\begin{definition}[Base Table]
The base table, $T_\textit{base}$ = \{$c_1$, $c_2$, \ldots, $c_n$, $c_l$\}, refers to the original table that includes the initial set of features and the label column before any augmentation. Each $c_i$ ($i\in$ [1, $n$]) represents a column and $c_l$ denotes the label column, which stores the prediction target for each row.
\end{definition}

\begin{definition}[Candidate Tables]
The candidate tables, $\mathcal{T}$ = \{$T_1$, $T_2$, \ldots, $T_m$\}, are $m$ joinable tables that can be used to augment a base table $T_{base}$. Each table is denoted by $T_j$ = \{$f_{j,1}$, $f_{j,2}$, \ldots, $f_{j,k}$\} ($j\in$ [1, $m$]), where $f_{j,k}$ denotes the $k$-th column (a.k.a feature)\footnote{We use two terms, column and feature, alternately in this paper.} in table $T_j$.
\end{definition}

\begin{definition}[Join Path]
Given a set of tables \( T_1, T_2, \ldots, T_n \), a join path \( P \) can be defined as a sequence of join operations between pairs of tables. A join operation between any two tables \( T_i \) and \( T_j \) can be denoted as \( T_i \bowtie T_j \). The join path \( P \) can be represented as: \( P = T_1 \bowtie T_2 \bowtie T_3 \ldots \bowtie T_n \).
\end{definition}

% ==================================================
\begin{figure*}
  \centering
  % \vspace{-15pt}
  \includegraphics[width=0.95\linewidth]{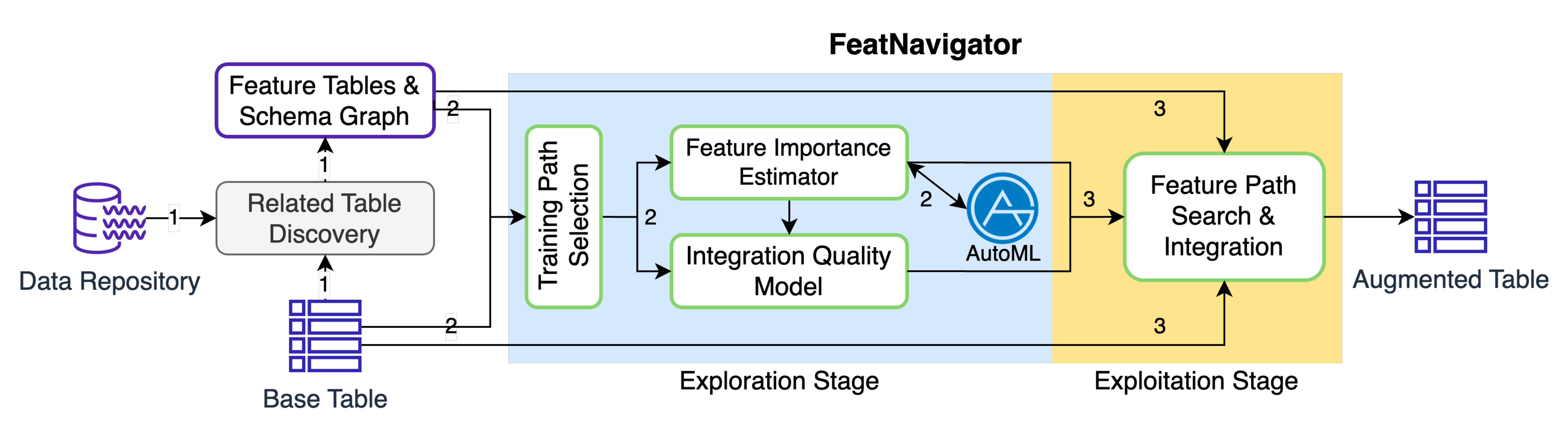}
  % \vspace{-15pt}
  \caption{\method\ overview.}
  \label{fig:sys}
  % \vspace{-5pt}
\end{figure*}
% ==================================================

Hence a join path between the base table $T_{base}$ and a candidate table $T_c$ can be expressed as: \( P = T_{base} \bowtie T_i \bowtie T_{i+1} \ldots \bowtie T_c \). Without loss of generality, we allow different types of join operators between the base table \( T_{base} \) and a candidate table \( T_c\in \mathcal{T} \), including inner join, left outer join, right outer join, etc.

\begin{definition}[Augmented Table]
The augmented table \( T_{aug} \) is the result of joining the base table \( T_{base} \) and a candidate table \( T_c \) following a join path \( P \).
\label{def:taug}
\end{definition}

Note that there can be multiple join paths between the base table and a candidate table. Hence the augmented table could be different depending on the chosen join path.

\begin{definition}[Machine Learning Task]
A machine learning (ML) task is defined as a function \( f \) that maps an input space \( \mathcal{X} \) to an output space \( \mathcal{Y} \). Given a dataset \( D = \{ (x_1,y_1), (x_2,y_2), \ldots, (x_n,y_n) \} \), where \( x_i \in \mathcal{X} \) and \( y_i \in \mathcal{Y} \), the objective is to learn the function \( f \) that best approximates the true underlying relationship from the inputs to the outputs. For a classification task, we have \( f: \mathcal{X} \to \mathcal{Y} \), where \( \mathcal{Y} \) is a discrete set of labels. For a regression task, we have \( f: \mathcal{X} \to \mathbb{R} \), where \( \mathcal{Y} \) is a continuous domain.
\end{definition}

\begin{definition}[Utility Gain]
Given a ML task, the utility score, \( US(T) \), is defined as the objective or performance metric (e.g., accuracy, precision, recall, F1 score, etc.) of the task on a table $T$ achieved by a specific ML model. Note that \( US(T) \) uses the negation of a metric value when smaller values are preferred (e.g., mean squared error (MSE), mean absolute error (MAE), etc.). Let \( US(T_{base}) \) be the performance metric on the base table, and \( US(T_{aug}) \) be the performance metric on the augmented table. The utility gain \( UG \) is defined as the improvement in the performance metric: \( UG = US(T_{aug}) - US(T_{base}) \).
\label{def:ug}
\end{definition}

\para{Problem Statement.}
Given a base table \( T_{base} \), a set of candidate tables \( \mathcal{T} \), and a ML task, our goal is to select a subset of tables from \( \mathcal{T} \) and to identify the best join paths to augment with \( T_{base} \), such that the utility gain \( UG \) of the ML task on \( T_{aug} \) is maximized.

The complexity of choosing a subset of tables (i.e., feature selection) generally increases exponentially with the number of candidate tables (features), as all possible combinations of features need to be evaluated, making the problem NP-hard. Even with various heuristic and optimization techniques~\cite{DBLP:conf/icml/JohnKP94,298224} to reduce the complexity, the cost of feature selection can be still expensive. Worse yet, enumerating all possible join paths among the base and candidate tables is prohibitively expensive due to its exponential time complexity in the worst case. The combination of these two exponential search spaces makes the problem intractable, consequently needing an effective and efficient method.

% Also, we treat the machine learning model (linear model, deep neural network, etc.) and corresponding the task as a black box such that the user can leverage our framework to augment features regardless of any particular model, and the model parameters can be updated iteratively.

\subsection{\method\ System Overview}
\label{sec:prel:overview}

For a given ML task, \method takes input as a base table with a prediction task on the target column and a tabular data repository consisting of candidate tables with features. \method leverages data discovery systems such as Aurum~\cite{aurum} or NYU Auctus~\cite{auctus} to identify the candidate tables within $k$-hop joins to the base tables. \method\ then triggers its utility gain model in two stages.

In the exploration stage, \method takes the following four steps. (1) It first groups the identified features into clusters based on their representations (embeddings, data distributions, types, etc.) using an efficient clustering algorithm. (2) To train the integration quality model, \method employs a path selection strategy to choose a number of features and candidate join paths. 
% for a given budget. The budget can be time-based, feature-based, or join path-based. In \method, we use the number of join paths allowed to explore as our budget. 
(3) \method then follows the selected join paths to integrate the target features into the base table and runs AutoML (e.g., AutoGluon\footnote{\url{https://auto.gluon.ai/}}) to find out the actual utility gain compared to the model performance only using the base table. (4) \method effectively collects a sufficient amount of training data to train the integration quality model and also assigns the actual feature importance to the selected features in the clusters. The importance of the other features can be estimated based on their relative distance to the ones with the actual importance assigned.

In the exploitation stage, \method leverages both a feature importance estimator and an integration quality model to conduct an efficient and effective search over all identified features and their associated join paths. It iteratively selects the most important feature and uses the integration quality model to select the optimal join path (i.e., best integration quality) until a predefined number of features is reached. 
% Again here the budge can be the number of features to augment or the number of join paths to utilize. 
\method then follows the selected join paths to integrate the selected features with the base table and generates the final augmented table for the given ML task. Finally, AutoML solutions (e.g., AutoGluon) can train, select and deploy the best performing machine learning model on the augmented table~\cite{erickson2020autogluon}.

\section{Utility Gain Modeling}
\label{sec:decomposition}

As discussed in Section~\ref{sec:prel:notions}, the problem of \textit{maximizing utility gain} for a given ML task is intractable. Hence we propose to decompose this problem into two sub-problems, namely maximizing feature importance and maximizing integration quality. The motivation behind this decomposition is that the effectiveness of a candidate feature relies on two critical factors, namely (1) feature importance -- how useful is a feature to the ML model if it can be fully leveraged (e.g., the significance of product ratings in the context of product recommendations), and (2) integration quality -- how many data instances in the base table can be augmented by a feature from a candidate table. In the rest of this section, we will introduce several key concepts to help define these two factors and use them together to derive the decomposition and our final optimization goal. The detailed algorithms for achieving the derived optimization goal are then introduced in the next section.

%To define feature importance, we introduce a concept of virtual table \( T_{virt} \), an \textit{ideal} case of \( T_{aug} \) when \textit{all possible} instances in the base table are augmented by feature instances from a candidate table. As depicted in Figure~\ref{fig3:virtual}, all three instances in the augmented table \( T_{virt} \) are augmented with the rating attributes (i.e., iPhone 15 with 5, Pixel 8 with 4.5 and Galaxy S23 with 4). Note that a virtual table is specific to an individual candidate feature.

\para{Feature Importance.} To define the \textit{feature importance}, we first introduce the concept of a \textit{virtual table}. Let \( T_{aug} \) denote a resulting augmented table from a base table \( T_{base} \) and a candidate table \( T_{c} \). The virtual table \( T_{virt} \) refers to the ideal case where there exists a join path $P$ such that every instance in \( T_{base} \) can be augmented by a feature instance from \( T_{c} \) via this $P$ (i.e., iPhone 15 with a rating of 5, Pixel 8 with a rating of 4.5, and Galaxy S23 with a rating of 4 shown in Figure~\ref{fig3:virtual}). Note that a virtual table is specific to a base table and an individual candidate feature. 
% The feature importance of this particular candidate feature to the base table is then defined as:

\begin{definition}[Feature Importance]
\label{def.fi}
Feature importance is defined as the largest utility gain achievable by augmenting the base table with a candidate feature $f_{i,j} \in T_i$, i.e., 
\[ FI(f_{i,j}) = US(T_{virt(i,j)}) - US(T_{base}), \]
where $f_{i,j}$ is the $j$-th feature from the table $T_i$, $T_{virt(i,j)}$ is the virtual table augmented by the feature $f_{i,j}$ and $FI(f_{i,j})$ ranges from -1 to 1.
\end{definition}

$FI(\cdot)$ quantifies the (positive or negative) impact of a candidate feature to the ML task on the base table. Note that the value of $FI(\cdot)$ depends on the corresponding metric used by the utility score.
% Assuming the maximum model effectiveness being 1 or 100\%, a candidate feature can bring the model effectiveness, in the most extreme cases, from 0 to 1 or from 1 to 0. These cases define the range of $FI(\cdot)$ being [-1,1].
% chuan - it still does not tell how model effectiveness [0, 1] is mapped to [-1, 1] w.r.t feature importance

% ==================================================
\begin{figure}
\begin{subfigure}{\columnwidth}
  \centering
  \includegraphics[width=\linewidth]{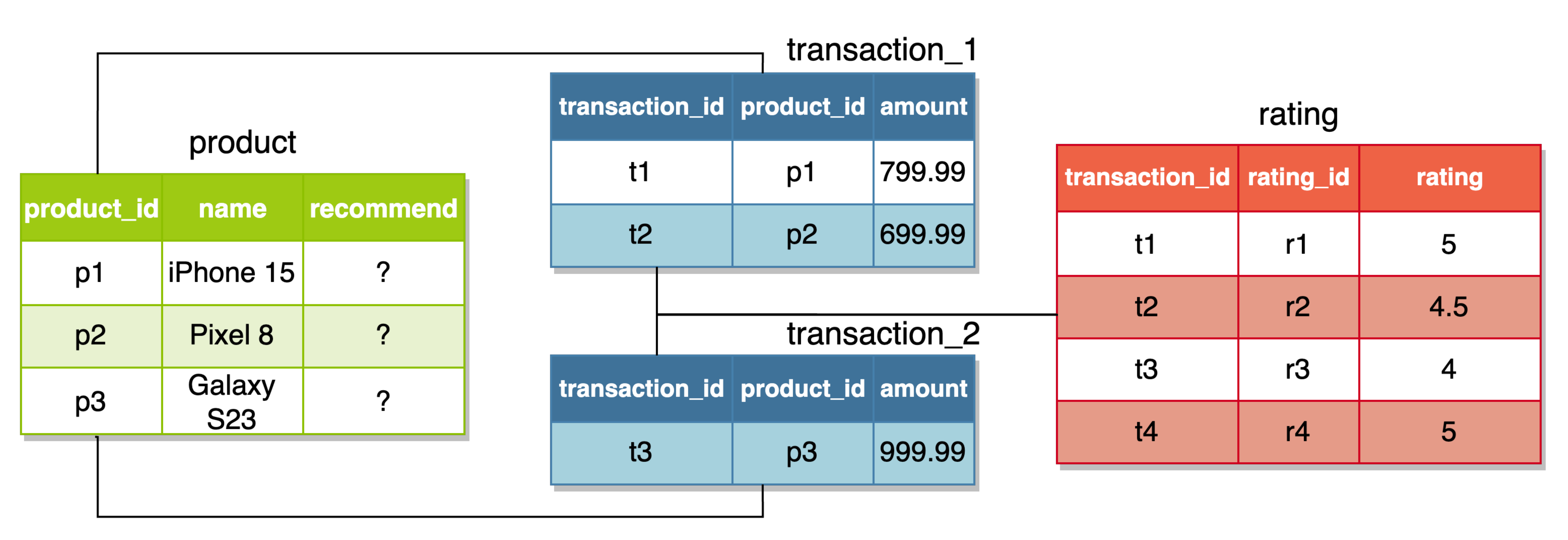}
  \caption{Join graph}
  \label{fig3:sg}
\end{subfigure}
\newline
\begin{subfigure}{.32\columnwidth}
  \centering
  \includegraphics[width=\linewidth]{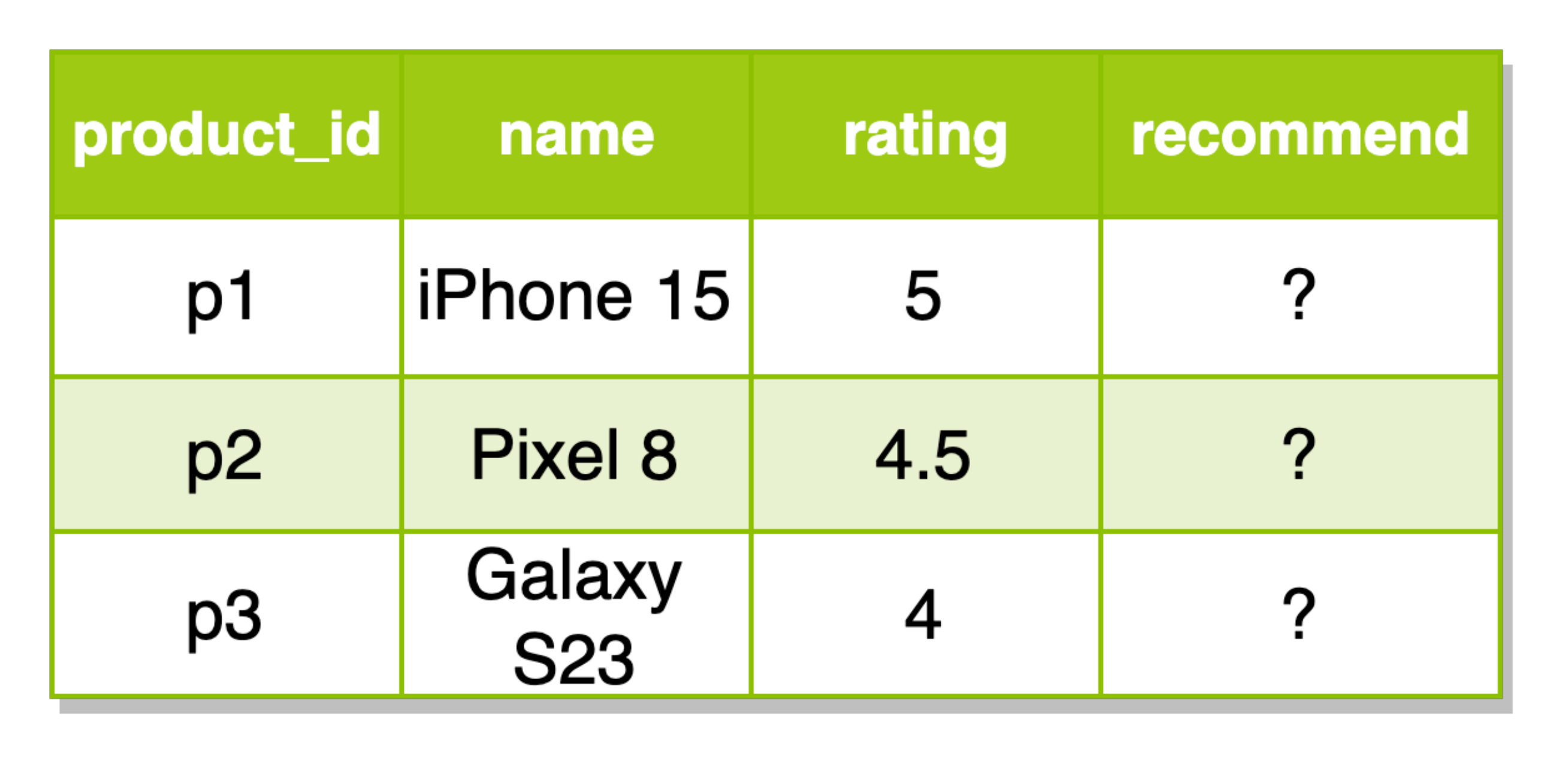}  
  \caption{$T_{\textit{virt}}$}
  \label{fig3:virtual}
\end{subfigure}
\begin{subfigure}{0.33\columnwidth}
  \centering
  \includegraphics[width=\linewidth]{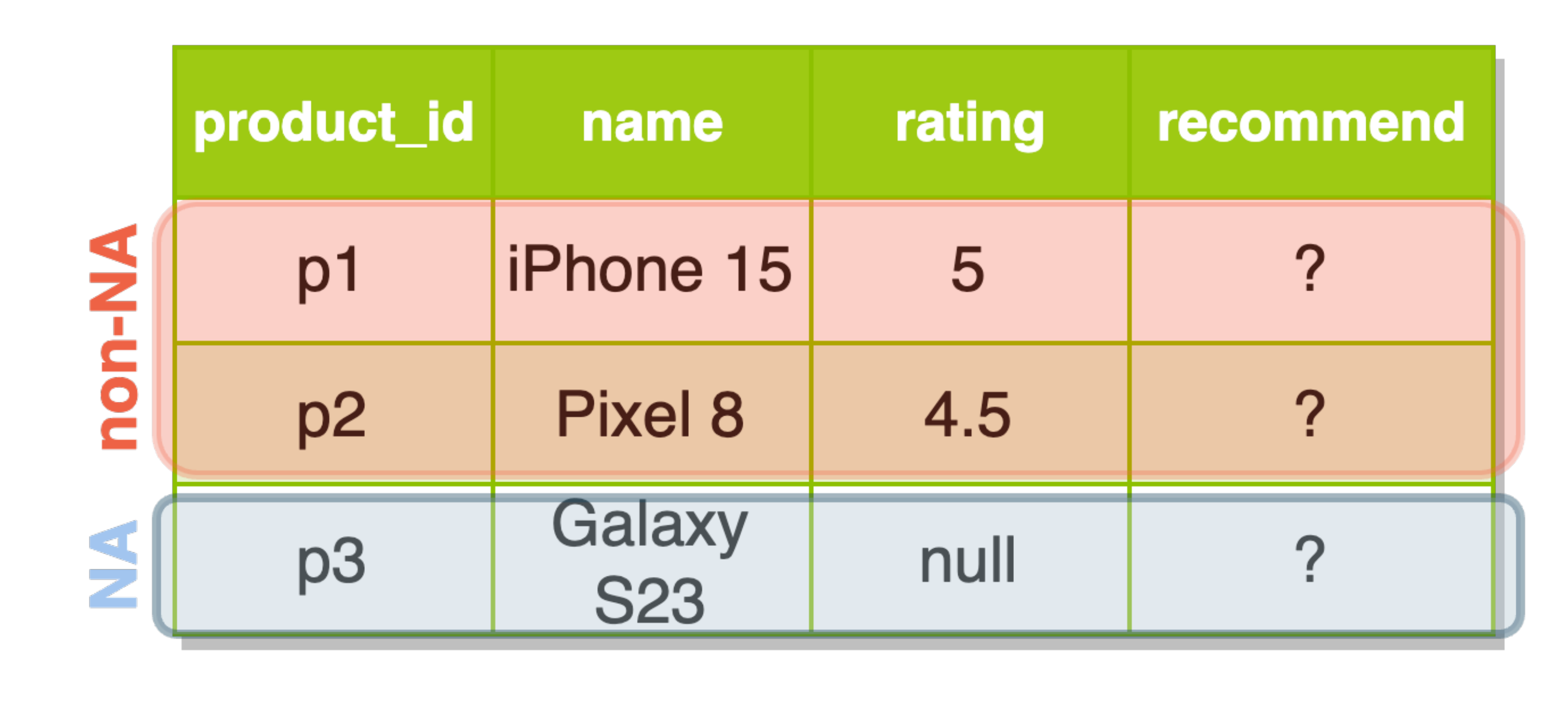}  
  \caption{$T_{\textit{aug\_1}}$}
  \label{fig3:aug1}
\end{subfigure}
\begin{subfigure}{.33\columnwidth}
  \centering
  \includegraphics[width=\linewidth]{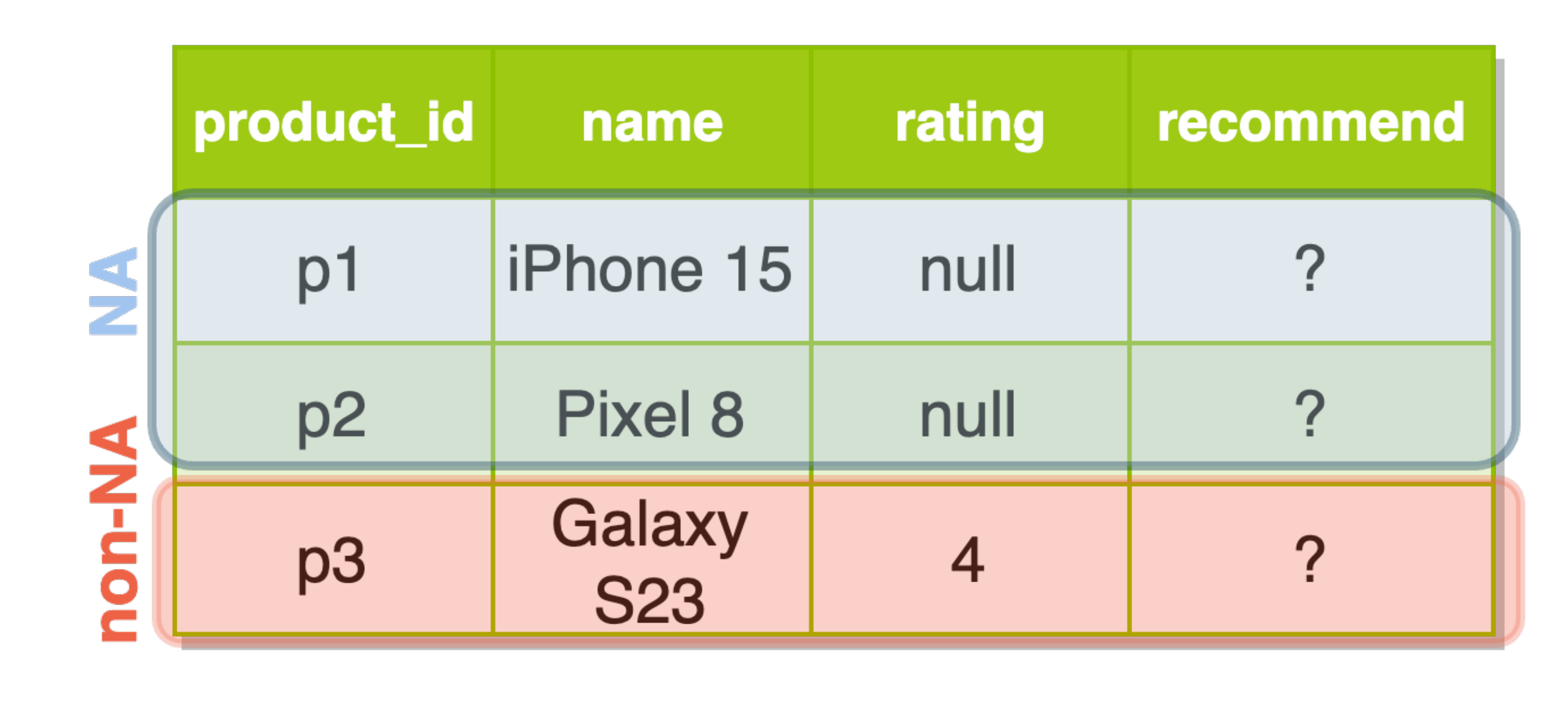}  
  \caption{$T_{\textit{aug\_2}}$}
  \label{fig3:aug2}
\end{subfigure}
\vspace{-15pt}
\caption{A join graph with examples of augmented tables.}
% \vspace{-10pt}
\label{fig3}
\end{figure}
% ==================================================

\para{Integration Quality.} On the other hand, integration quality measures the number of feature instances that can be carried over into the base table via a particular join path. It quantifies the proportion of instances in the base table that can be successfully augmented by the candidate table. For example, Figures~\ref{fig3:aug1} and~\ref{fig3:aug2} show two augmented tables using the same feature \textsf{rating}. The join path $P_1$ = \textsf{product} $\bowtie$ \textsf{transaction\_1} $\bowtie$ \textsf{rating} brings 2 out of 4 instances from the \textsf{rating} table to the \textsf{product} table. However, only 1 instance from the \textsf{rating} table can be augmented to the base table by following $P_2$ = \textsf{product} $\bowtie$ \textsf{transaction\_2} $\bowtie$ \textsf{rating}. Hence $P_1$ is a better integration path than $P_2$. 
% \textcolor{blue}{The integration quality w.r.t to a base table and a candidate feature via a particular join path P is then defined as:}

When data instances in the base table cannot be augmented by a candidate feature, we use the special token \textsf{null} to fill the missing feature value, which will be used for the modeling training and evaluation. Note that AutoML solutions such as AutoGluon can automatically handle missing values during model training~\cite{erickson2020autogluon}.
Therefore, an augmented table can be split into two parts, namely \textit{non-NA} and \textit{NA}. \textit{non-NA} refers to the set of instances in $T_{aug}$ that are augmented by feature instances, and \textit{NA} denote the data instances in $T_{aug}$ filled with \textsf{null} values. For example, in Figure~\ref{fig3:aug1}, two products $p1$ and $p2$ in the red box belong to \textit{non-NA}, and the other product $p3$ composes the \textit{NA}.

\begin{definition}[Integration Quality]
Integration quality of an augmented table \( T_{aug} \) is defined as a ratio between the number of feature instances that contribute to the augmentation and the total number of instances in the base table through a specific join path: 
\[ IQ(T_{aug}) = \frac{|\textit{non-NA}|}{|T_{aug}|} = \frac{\#\ augmented\ instances}{\#\ instances} \text{ in } T_{aug}, \]
where \( T_{aug} = Augment(T_{base}, f_{i,j}, P_k) \) indicates that $T_{aug}$ is produced by augmenting the feature $f_{i,j}$ to the base table, following the path $P_k$. Intuitively, $IQ(T_{aug})$ ranges from 0 to 1.
\label{def:iq}
\end{definition}

\para{Decomposition of \textit{UG}.} We argue that the \textit{utility gain} ($UG$) can be expressed by the multiplication of \textit{feature importance} ($FI$) and \textit{integration quality} ($IQ$), i.e., \( UG(T_{aug}) = FI(f_{i,j}) \times IQ(T_{aug}) \). To prove this hypothesis, we will discuss (1) how to estimate the utility score of an augmented table with missing feature values (Theorem~\ref{lm.weightedUG}), and (2) the error bound of utility score estimation (Theorem~\ref{lm.error_bound}).

% To decompose \textit{utility gain} ($UG$) into \textit{feature importance} ($FI$), and \textit{integration quality} ($IQ$), we start with the following Lemmas. 

% \chuan{maybe add an example or reuse fig.3 to provide more intuitive relationship between the classification accuracy and the cardinality of non-NA and NA...}

\begin{lemma}
\label{lm.weightedUG}
\( US(T_{aug}) \) is the weighted average of
\( \ US(\textit{non-NA}) \) and \( US(\textit{NA}), \) 
namely, \( US(T_{aug}) = p \times US(\textit{non-NA}) + (1-p) \times US(\textit{NA}) \), where $p$ denotes the integration quality measurement for simplicity purposes.
\end{lemma}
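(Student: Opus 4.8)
The plan is to exploit the fact that essentially all utility metrics admitted by Definition~\ref{def:ug} (accuracy, log-loss, MSE, MAE, and, with the negation convention, the other error-type scores) are \emph{instance-decomposable}: the utility score on a table $T$ can be written as an average of a per-row contribution, $US(T) = \frac{1}{|T|}\sum_{x \in T} s(x)$, where $s(x)$ depends only on the model's prediction and the true label at row $x$. I would open the proof by isolating this property and stating it as the working assumption on the class of metrics under consideration, so that the rest of the argument is a purely arithmetic identity.

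The core step is then a one-line partition computation. By construction of $T_{aug}$, its rows split into the disjoint sets $\textit{non-NA}$ and $\textit{NA}$ with $|\textit{non-NA}| + |\textit{NA}| = |T_{aug}|$, and by Definition~\ref{def:iq} we have $p = IQ(T_{aug}) = |\textit{non-NA}|/|T_{aug}|$, hence $1-p = |\textit{NA}|/|T_{aug}|$. Splitting the defining sum over rows gives
\begin{align*}
US(T_{aug})
&= \frac{1}{|T_{aug}|}\sum_{x \in T_{aug}} s(x)\\
&= \frac{1}{|T_{aug}|}\Big(\sum_{x \in \textit{non-NA}} s(x) + \sum_{x \in \textit{NA}} s(x)\Big)\\
&= \frac{|\textit{non-NA}|}{|T_{aug}|}\, US(\textit{non-NA}) + \frac{|\textit{NA}|}{|T_{aug}|}\, US(\textit{NA}),
\end{align*}
where the last equality uses $US(\textit{non-NA}) = \frac{1}{|\textit{non-NA}|}\sum_{x\in\textit{non-NA}} s(x)$ and likewise for $\textit{NA}$. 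Substituting $p$ and $1-p$ yields exactly $US(T_{aug}) = p\cdot US(\textit{non-NA}) + (1-p)\cdot US(\textit{NA})$. An equivalent, slightly slicker phrasing reads $US$ as the expectation of $s(x)$ under the uniform distribution over the rows of $T_{aug}$ and invokes the law of total expectation, conditioning on whether $x$ lies in $\textit{non-NA}$ or in $\textit{NA}$; the mixing weights are then precisely the membership probabilities $p$ and $1-p$.

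The only genuine obstacle is metrics that are \emph{not} exactly instance-decomposable — micro-averaged precision ($TP/(TP{+}FP)$), recall, F1, or ranking scores such as AUC — for which the union-level value is a weighted combination of the per-part values with weights determined by each part's share of predicted positives (or of candidate pairs), not by $p$ and $1-p$. I would address this by either (i) restricting the lemma to decomposable metrics and remarking that the ratio metrics satisfy the identity approximately whenever the relevant rates (base rate, predicted-positive rate) are comparable across the two partitions, or (ii) treating the equality as the chosen attribution rule and letting the subsequent analysis of Theorem~\ref{lm.error_bound} absorb the resulting slack. A one-sentence caveat should also fix whether $US$ is evaluated on a held-out split: since the $\textit{non-NA}$/$\textit{NA}$ partition is induced row-wise by the join path, it carries over verbatim to any evaluation subset, so no step of the argument changes.
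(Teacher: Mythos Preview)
Your proposal is correct and follows essentially the same route as the paper: the paper picks accuracy as the representative metric, writes $US(T_{aug}) = CP_{T_{aug}}/|T_{aug}|$, splits the correct-prediction count over the $\textit{non-NA}/\textit{NA}$ partition, and pulls out the size ratios $p$ and $1-p$ --- exactly your sum-partition argument with $s(x)$ specialized to the indicator of a correct prediction. Your abstraction to an arbitrary instance-decomposable score, and your explicit caveat about F1/precision/AUC not being row-averages, are in fact more careful than the paper, which asserts ``without loss of generality'' for accuracy and leaves the non-decomposable metrics listed in Definition~\ref{def:ug} unaddressed.
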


% \xiao{I feel like that US(NA) needs to be discussed here. Why computing US(NA) makes sense. What is that for.}
% chuan - i think US(NA) is explained above. the motivation of computing US(NA) is lemma 3.5. i'm not sure if we should or possible to provide any concrete justification this early.
% \xiao{Do you need to name the correct prediction as TP+TN? Can you simplify this to just $Correct_{non-NA}$ and $Correct_{NA}$}

% \vspace{-5pt}
\begin{proof}
Without loss of generality, we choose the classification accuracy as the utility score. By definition, the classification accuracy for \textit{non-NA} and \textit{NA} in $T_{aug}$ is:
\[ US(\textit{non-NA}) =\frac{CP_{\textit{non-NA}}}{|\textit{non-NA}|}\text{ and } US(\textit{NA})=\frac{CP_{\textit{NA}}}{|NA|}, \]
where $CP$ indicates the correct predictions including true positives and negatives, $|\textit{non-NA}|$ and $|\textit{NA}|$ are the cardinality of \textit{non-NA} and \textit{NA} in $T_{aug}$. Consequently, the accuracy of the augmented table $T_{aug}$ can be expressed as follows:

% \vspace{-5pt}
\begin{equation}
\begin{aligned}
US(T_{aug}) & = \frac{CP_{T_{aug}}}{|T_{aug}|} = \frac{CP_{\textit{non-NA}} + CP_{\textit{NA}}}{|T_{aug}|}, \\
            & = \frac{CP_{\textit{non-NA}}}{|\textit{non-NA}|} \times \frac{|\textit{non-NA}|}{|T_{aug}|} + \frac{CP_{\textit{NA}}}{|\textit{NA}|} \times \frac{|\textit{NA}|}{|T_{aug}|},\\
            & = p \times US(\textit{non-NA}) + (1-p) \times US(\textit{NA}).
\end{aligned}
\end{equation}
% \vspace{-5pt}
\end{proof}
% \chuan{explain how to relax it to other utility score functions...}
% \xiao{The above derivation is fine. One suggestion -- you need to make the equation in def 3.2 consistent with this derivation here so that people immediately know that the deriviation makes use of that definition, i.e. $\frac{|NA|}{|T_{aug}|}$ and $\frac{|non-NA|}{|T_{aug}|}$}

\begin{lemma}
\label{lm.expectation_variacnce}
If a random variable x, s.t. \( -1 < x < 1, E(x)=0, Var(x) \leq \epsilon < 1 \), then \( 0 \leq E(|x|) \leq 2\sqrt{\epsilon} \) and \( 0 \leq Var(|x|) \leq \epsilon \).
\end{lemma}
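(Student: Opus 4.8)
The plan is to funnel both claims through the single quantity $E(x^2)$, which equals $Var(x)$ because $E(x)=0$, and is therefore at most $\epsilon$. With that in hand, everything else is a one-line application of Jensen's inequality (equivalently Cauchy--Schwarz) and the nonnegativity of variance.

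First I would handle the bound on $E(|x|)$. The lower bound $E(|x|)\ge 0$ is immediate since $|x|\ge 0$ pointwise. For the upper bound I would apply Jensen's inequality to the convex map $t\mapsto t^2$, which gives $\bigl(E(|x|)\bigr)^2\le E(|x|^2)=E(x^2)=Var(x)\le\epsilon$, hence $E(|x|)\le\sqrt{\epsilon}\le 2\sqrt{\epsilon}$. So the stated bound follows, in fact with room to spare.

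Next I would bound $Var(|x|)$. The lower bound $Var(|x|)\ge 0$ holds because any variance is nonnegative. For the upper bound, use the identity $|x|^2=x^2$ to write $Var(|x|)=E(|x|^2)-\bigl(E(|x|)\bigr)^2=E(x^2)-\bigl(E(|x|)\bigr)^2\le E(x^2)=Var(x)\le\epsilon$, where the inequality simply drops the nonnegative term $\bigl(E(|x|)\bigr)^2$.

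I do not expect a genuine obstacle here: the lemma is a routine consequence of the definition of variance together with Jensen's/Cauchy--Schwarz inequality. The only points worth flagging are (i) the identity $E(x^2)=Var(x)$ relies on the hypothesis $E(x)=0$; (ii) the boundedness hypothesis $-1<x<1$ and the condition $\epsilon<1$ are not actually needed to derive these two inequalities, so they presumably matter only when the lemma is invoked inside Theorem~\ref{lm.error_bound}; and (iii) the constant $2$ in $2\sqrt{\epsilon}$ is deliberately loose (already $\sqrt{\epsilon}$ suffices), which is harmless for the downstream error-bound argument.
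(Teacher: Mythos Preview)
Your argument is correct, and for the variance bound it coincides with the paper's. For the expectation bound, however, the paper takes a different route: it introduces a free parameter $0<a<1$, observes that $|x|\le a+\tfrac{1}{a}x^2$ pointwise (splitting into the cases $|x|\le a$ and $a<|x|<1$), takes expectations to get $E(|x|)\le a+\tfrac{1}{a}E(x^2)\le a+\tfrac{\epsilon}{a}$, and then optimizes by setting $a=\sqrt{\epsilon}$ to land exactly on $2\sqrt{\epsilon}$. Your Jensen/Cauchy--Schwarz step is shorter and in fact yields the sharper bound $E(|x|)\le\sqrt{\epsilon}$, so the constant $2$ in the lemma is indeed slack; the paper's parametric argument also explains why the hypotheses $-1<x<1$ and $\epsilon<1$ appear (they are used to keep $a=\sqrt{\epsilon}$ inside the stated range $0<a<1$), whereas your proof makes clear they are unnecessary for the inequalities themselves.
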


\begin{proof}
To begin with, we have 
\begin{equation}
E(x^2) = Var(x) - E(x)^2 \leq \epsilon
\label{eq.lm.expectation_variance.1}
\end{equation}
Let $a$ be a constant \( 0 < a < 1 \), and we have \( E(|x|)=E(\sqrt{x^2}) \). For \( a<|x|<1 \), we have \( \sqrt{x^2}<\frac{1}{a}x^2 \), and for \( 0 \leq |x| \leq a \), we have \( \sqrt{x^2} \leq a \). Consequently, \( \sqrt{x^2} \leq a+\frac{1}{a}x^2 \). Thus for $E(|x|)$, we have
\begin{equation}
\label{eq.lm.expectation_variance.2}
\begin{aligned}
E(|x|)  & \leq E(a+\frac{1}{a}x^2) \\
        & =a+\frac{1}{a}E(x^2) \\
\end{aligned}
\end{equation}

Based on Equations~\ref{eq.lm.expectation_variance.1} and~\ref{eq.lm.expectation_variance.2}, we have 
% \begin{equation}
% \label{eq.lm.expectation_variance.3}
\( E(|x|) \leq a+\frac{1}{a}\epsilon \), and
% \end{equation}
% In Equation~\ref{eq.lm.expectation_variance.3}, 
let \( a=\sqrt{\epsilon} \), we have \( E(|x|) \leq 2\sqrt{\epsilon}\). For $Var(|x|)$, we have
\begin{equation}
\begin{aligned}
Var(|x|) & = E(|x|^2)-E(|x|)^2 \leq E(|x|^2) \\
         & = E(x^2) \leq \epsilon \\
\end{aligned}
\end{equation}

Given that \( 0 \leq E(|x|) \) and \( 0 \leq Var(|x|) \), we prove that \( 0 \leq E(|x|) \leq 2\sqrt{\epsilon} \) and \( 0 \leq Var(|x|) \leq \epsilon \).
\end{proof}

\begin{lemma}
\label{lm.error_bound}
The utility score of \textit{non-NA} part of $T_{aug}$ can be estimated by the one of\ \ \( T_{virt} \) with an error. The expectation and variance of the absolute error are bounded by the reciprocal of the number of rows in \( T_{virt} \). Namely,
\[ error = (US(\textit{non-NA}) - US(T_{virt})) \times \frac{|\textit{non-NA}|}{|T_{virt}|}, \]
with \( 0 \leq E(|error|) \leq \frac{2}{\sqrt{|T_{virt}|}} \) and \( 0 \leq VAR(|error|) \leq \frac{1}{|T_{virt}|} \). Thus \( E(|error|) \approx 0 \) and \( Var(|error|) \approx 0 \), when \( |T_{virt}| \to +\infty \).
\end{lemma}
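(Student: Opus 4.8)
The plan is to view the \emph{non-NA} part of $T_{aug}$ as a uniformly random sub-sample of the rows of the virtual table $T_{virt}$, and then to instantiate Lemma~\ref{lm.expectation_variacnce} with the random variable $x := error$. The modeling assumption is natural given Definition~\ref{def.fi}: $T_{virt}$ contains one augmented row for \emph{every} instance of $T_{base}$, whereas following an actual join path $P_k$ retains only those $k := |\textit{non-NA}|$ rows whose join keys happen to match along $P_k$; treating the identity of this retained subset as drawn uniformly at random (for a fixed size $k$) is exactly what lets us reason about the error statistically. Under this view $US(\textit{non-NA})$ is the sample mean of a size-$k$ subset of the per-row correctness scores of $T_{virt}$, and $error = \big(US(\textit{non-NA}) - US(T_{virt})\big)\cdot\frac{k}{|T_{virt}|}$ is a scaled sampling deviation.

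First I would verify the three hypotheses of Lemma~\ref{lm.expectation_variacnce} for $x = error$. \emph{Boundedness}: for classification accuracy (and for any metric that is an average of per-row scores in $[0,1]$) both $US(\textit{non-NA})$ and $US(T_{virt})$ lie in $[0,1]$, so their difference lies in $(-1,1)$; multiplying by $\frac{k}{|T_{virt}|}\le 1$ keeps $-1 < x < 1$. \emph{Zero mean}: by the symmetry of uniform sampling, $E\big[US(\textit{non-NA})\big] = US(T_{virt})$, hence $E[x] = 0$. \emph{Variance bound}: write $US(T_{virt}) = \frac{1}{n}\sum_{i=1}^{n} Z_i$ with $n := |T_{virt}|$ and $Z_i\in\{0,1\}$ the indicator that row $i$ of $T_{virt}$ is predicted correctly. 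The finite-population sampling-variance formula gives $Var\!\big(US(\textit{non-NA})\big) = \frac{\sigma^2}{k}\cdot\frac{n-k}{n-1}$ with population variance $\sigma^2 = \bar Z(1-\bar Z)\le \tfrac{1}{4}$. Carrying the $\frac{k^2}{n^2}$ scaling through,
\[
Var(x) = \frac{k^2}{n^2}\,Var\!\big(US(\textit{non-NA})\big) \;\le\; \frac{k^2}{n^2}\cdot\frac{1}{4k} \;=\; \frac{k}{4n^2}\;\le\;\frac{1}{4n}\;\le\;\frac{1}{n} =: \epsilon ,
\]
and $\epsilon < 1$ whenever $n>1$.

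With these facts, Lemma~\ref{lm.expectation_variacnce} applied to $x = error$ yields $0\le E(|error|)\le 2\sqrt{\epsilon} = \frac{2}{\sqrt{|T_{virt}|}}$ and $0\le Var(|error|)\le \epsilon = \frac{1}{|T_{virt}|}$, and letting $|T_{virt}|\to+\infty$ drives both to $0$, which is the claim. The same argument goes through if the matching subset is modeled as an i.i.d.\ sample \emph{with} replacement (the $\frac{n-k}{n-1}$ factor simply disappears), and for regression-style metrics once per-row losses are rescaled into $[-1,1]$.

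The step I expect to be the main obstacle is not the algebra but the justification of the uniform-sampling model together with the bookkeeping of the $\frac{k}{|T_{virt}|}$ factor: one must ensure that the deviation being bounded is the \emph{scaled} quantity $\big(US(\textit{non-NA})-US(T_{virt})\big)\cdot\frac{k}{n}$ — which is precisely what makes the variance collapse to $O(1/n)$ rather than $O(1/k)$ — and one must handle the fact that $k = |\textit{non-NA}|$ is itself random, which I would do by conditioning on $k$ and noting every bound above is uniform over $k\le n$. I would also flag that the argument implicitly treats the learner's per-row correctness on $T_{virt}$ as a fixed labeling independent of which subset is sampled; relaxing that (e.g., retraining the model on \textit{non-NA}) would require an extra stability assumption on the learner.
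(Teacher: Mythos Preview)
Your proposal is correct and follows essentially the same route as the paper: both model \textit{non-NA} as a uniform size-$k$ sample from $T_{virt}$ (the paper names the hypergeometric distribution explicitly, you invoke the equivalent finite-population sampling-variance formula), derive $E(error)=0$ and $Var(error)\le 1/|T_{virt}|$, and then feed these into Lemma~\ref{lm.expectation_variacnce}. Your write-up is in fact more careful than the paper's on two points the paper leaves implicit --- conditioning on the random size $k$ and the assumption that the per-row correctness labeling is fixed independently of which subset is retained --- so nothing is missing.
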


\begin{proof}
Without loss of generality, we again choose the classification accuracy as the utility score. Let us denote $N$ as the total number of instances (rows) in \( T_{virt} \), $M$ as the number of correct predictions (i.e., true positives and true negative) when the model is trained on \( T_{virt} \), $n$ as the number of instances (rows) in \textit{non-NA}, and $m$ as the number of correct predictions when the model is trained on \textit{non-NA}. 

We express the accuracy attained through training on \( T_{virt} \) and \textit{non-NA} as \( US(T_{virt})=\frac{M}{N} \) and \( US(\textit{non-NA})=\frac{m}{n} \), respectively. The approximation error, by definition, is expressed as:
\begin{equation}
\begin{aligned}
error & = US(\textit{non-NA}) - US(T_{virt}) \times \frac{|\textit{non-NA}|}{|T_{virt}|}, \\
      & = \frac{n}{N}(\frac{m}{n}-\frac{M}{N}).
\end{aligned}
\label{eq:error1}
\end{equation}

Let us assume that $n$ instances are randomly selected from $N$ instances, implying that $m$ follows a hypergeometric distribution, i.e., $Pr(m=x) = \frac{\binom{M}{x} \binom{N-M}{n-x}}{\binom{N}{n}}$. The expectation of $m$ is $E(m)=\frac{nM}{N}$ and the variance of $m$ is $Var(m)=\frac{nM(N-n)(N-M)}{N^2(N-1)}$. Then we have
\begin{equation}
\begin{aligned}
E(\frac{n}{N}(\frac{m}{n} - \frac{M}{N})) & = 0,\\
Var(\frac{n}{N}(\frac{m}{n} - \frac{M}{N})) & = \frac{nM(N-n)(N-M)}{N^4(N-1)}.
\end{aligned}
\label{eq:error2}
\end{equation}

According Equations~\ref{eq:error1} and~\ref{eq:error2}, we can infer that \( E(error) = 0 \) and \( Var(error)=\frac{nM(N-n)(N-M)}{N^4(N-1)} \). Furthermore, as $N>M$, we have \( \frac{M(N-n)(N-M)}{N^2(N-1)}\leq 1 \), and hence \( Var(error) \leq \frac{n}{N^2} \leq \frac{1}{N}<1 \). By setting \( \epsilon=\frac{1}{N} \), we have \( E(error)=0 \) and \( Var(error) \leq \epsilon < 1\ (-1<error<1) \). Based on Lemma~\ref{lm.expectation_variacnce}, we have \( E(|error|) \leq 2\sqrt{\epsilon} = \frac{2}{\sqrt{N}} \) and \( Var(|error|) \leq \epsilon = \frac{1}{N} \). Subsequently, we have \( 0 \leq E(|error|) \leq \frac{2}{\sqrt{N}} \) and \( 0 \leq Var(|error|) \leq \frac{1}{N} \). Hence we prove \( E(|error|) \approx 0 \) and \( Var(|error|) \approx 0 \) when \( N \to +\infty \).
\end{proof}

Based on Lemma~\ref{lm.weightedUG} and Lemma~\ref{lm.error_bound}, Theorem~\ref{th:fi&iq} below states the relationships among utility gain, feature importance, and integration quality.

\begin{theorem}
\label{th:fi&iq}
The utility gain of an augmented table \( T_{aug} = Augment(T_{base}, f_{i,j}, P_k) \) can be approximated by a multiplication of the feature importance of $f_{i,j}$ and the integration quality of $f_{i,j}$ following the join path $P_k$, namely \( UG(T_{aug}) = FI(f_{i,j}) \times IQ(T_{aug}) \).
\end{theorem}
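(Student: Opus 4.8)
The plan is to chain Lemma~\ref{lm.weightedUG} and Lemma~\ref{lm.error_bound} together, using the fact that the \textit{NA} part of an augmented table carries no information beyond the base table. First I would apply Lemma~\ref{lm.weightedUG} with $p = IQ(T_{aug})$ to write $US(T_{aug}) = p\cdot US(\textit{non-NA}) + (1-p)\cdot US(\textit{NA})$, and then subtract $US(T_{base})$ on both sides to obtain
\[ UG(T_{aug}) = p\bigl(US(\textit{non-NA}) - US(T_{base})\bigr) + (1-p)\bigl(US(\textit{NA}) - US(T_{base})\bigr). \]
The structural observation that makes the second summand disappear is that every row in the \textit{NA} part has the candidate feature $f_{i,j}$ set to \textsf{null}, so from the ML model's perspective these rows are indistinguishable from the corresponding rows of $T_{base}$; hence $US(\textit{NA}) \approx US(T_{base})$ and we are left with $UG(T_{aug}) \approx p\bigl(US(\textit{non-NA}) - US(T_{base})\bigr)$.

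Next I would connect $US(\textit{non-NA})$ to the virtual table. Since augmentation only appends columns (never rows), $|T_{aug}| = |T_{virt}| = |T_{base}|$, so $p = |\textit{non-NA}|/|T_{virt}|$ and the error quantity of Lemma~\ref{lm.error_bound} is exactly $error = p\,\bigl(US(\textit{non-NA}) - US(T_{virt})\bigr)$. Rearranging gives $p\cdot US(\textit{non-NA}) = p\cdot US(T_{virt}) + error$, and therefore
\[ UG(T_{aug}) \approx p\cdot US(T_{virt}) - p\cdot US(T_{base}) + error = p\bigl(US(T_{virt}) - US(T_{base})\bigr) + error = IQ(T_{aug})\cdot FI(f_{i,j}) + error, \]
where the last equality uses Definition~\ref{def.fi}. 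By Lemma~\ref{lm.error_bound}, $E(|error|) \leq 2/\sqrt{|T_{virt}|}$ and $Var(|error|) \leq 1/|T_{virt}|$, so $error \approx 0$ for reasonably sized virtual tables, which yields the claimed identity $UG(T_{aug}) = FI(f_{i,j}) \times IQ(T_{aug})$. As in the supporting lemmas, I would state this derivation with classification accuracy as the utility score without loss of generality.

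I expect the main obstacle to be rigorously justifying $US(\textit{NA}) \approx US(T_{base})$: it rests on the assumption that appending an all-\textsf{null} column does not alter the learner's predictions on those rows (reasonable for AutoGluon-style models that treat \textsf{null} as a missingness indicator, but not literally true for an arbitrary model) and that the \textit{NA} subset is statistically comparable to the full base table. A secondary subtlety, already absorbed into Lemma~\ref{lm.error_bound}, is the assumption that \textit{non-NA} behaves like a uniform random sample of the rows of $T_{virt}$ so that the hypergeometric bound applies; if the join systematically selects easier or harder rows, $E(error)$ need not vanish and the result should carry an explicit bias term. I would therefore present the theorem as an approximation whose error is controlled by $|T_{virt}|$ and flag these two modeling assumptions at the point where they are used.
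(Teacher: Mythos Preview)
Your proposal is correct and follows essentially the same route as the paper: apply Lemma~\ref{lm.weightedUG}, replace $US(\textit{NA})$ by $US(T_{base})$ and $US(\textit{non-NA})$ by $US(T_{virt})$ via Lemma~\ref{lm.error_bound}, then recognise the result as $IQ\cdot FI$. If anything, you are more careful than the paper, which performs both substitutions in a single $\approx$ step without explicitly justifying $US(\textit{NA}) \approx US(T_{base})$ or carrying the Lemma~\ref{lm.error_bound} error term; your explicit tracking of these two approximations and the accompanying caveats is a strict improvement on the original presentation.
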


\begin{proof}
Theorem~\ref{th:fi&iq} can be proved by using Definition~\ref{def:ug}, Lemma~\ref{lm.weightedUG}, and Lemma~\ref{lm.error_bound}.
\small
\begin{equation}
\label{eq.ugfiiq}
\begin{aligned}
UG(T_{aug}) & = US(T_{Aug}) - US(T_{base}), \\ % check
            & = p \times US(\textit{non-NA}) + (1-p) \times US(\textit{NA}) - US(T_{\textit{base}}), \\ % check
            & \approx p \times US(T_\textit{virt}) + (1-p) \times US(T_{base}) - US(T_{\textit{base}}), \\
            & = p \times US(T_\textit{virt}) - p \times US(T_\textit{base}), \\ % check
            & = p \times (US(T_\textit{virt}) - US(T_\textit{base})), \\ % check
            & = FI(f_{i,j}) \times IQ(T_{aug}). % check
\end{aligned}
\end{equation}
\normalsize
\end{proof}

% ==================================================
\begin{figure*}
%\centering
% \vspace{-10pt}
\begin{subfigure}{.30\textwidth}
    \centering
    \includegraphics[width=\linewidth]{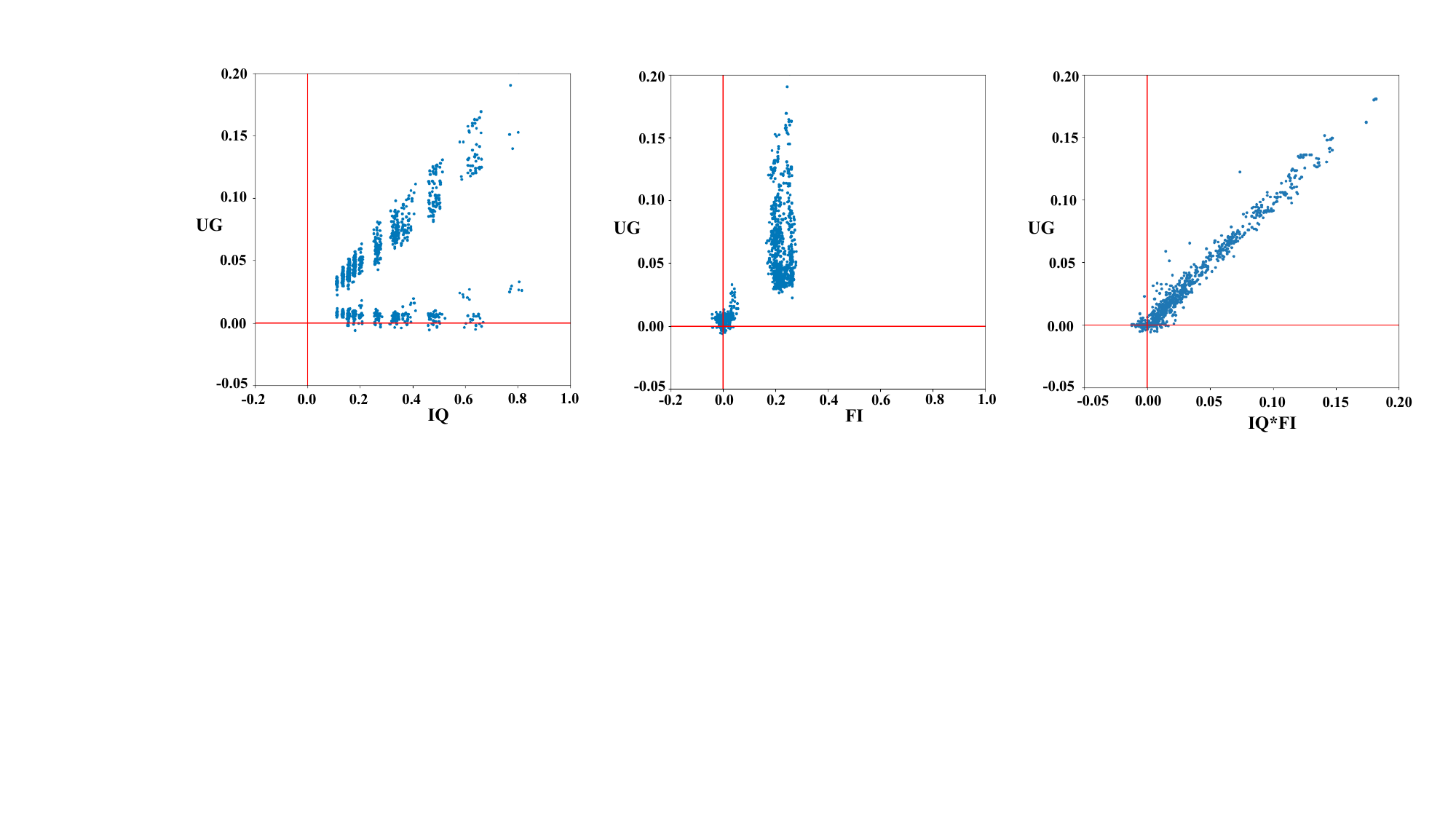}  
    \caption{Relationship between {\scriptsize $IQ$} and {\scriptsize $UG$}}
    \label{fig4:iqug}
\end{subfigure}
\begin{subfigure}{.30\textwidth}
    \centering
    \includegraphics[width=\linewidth]{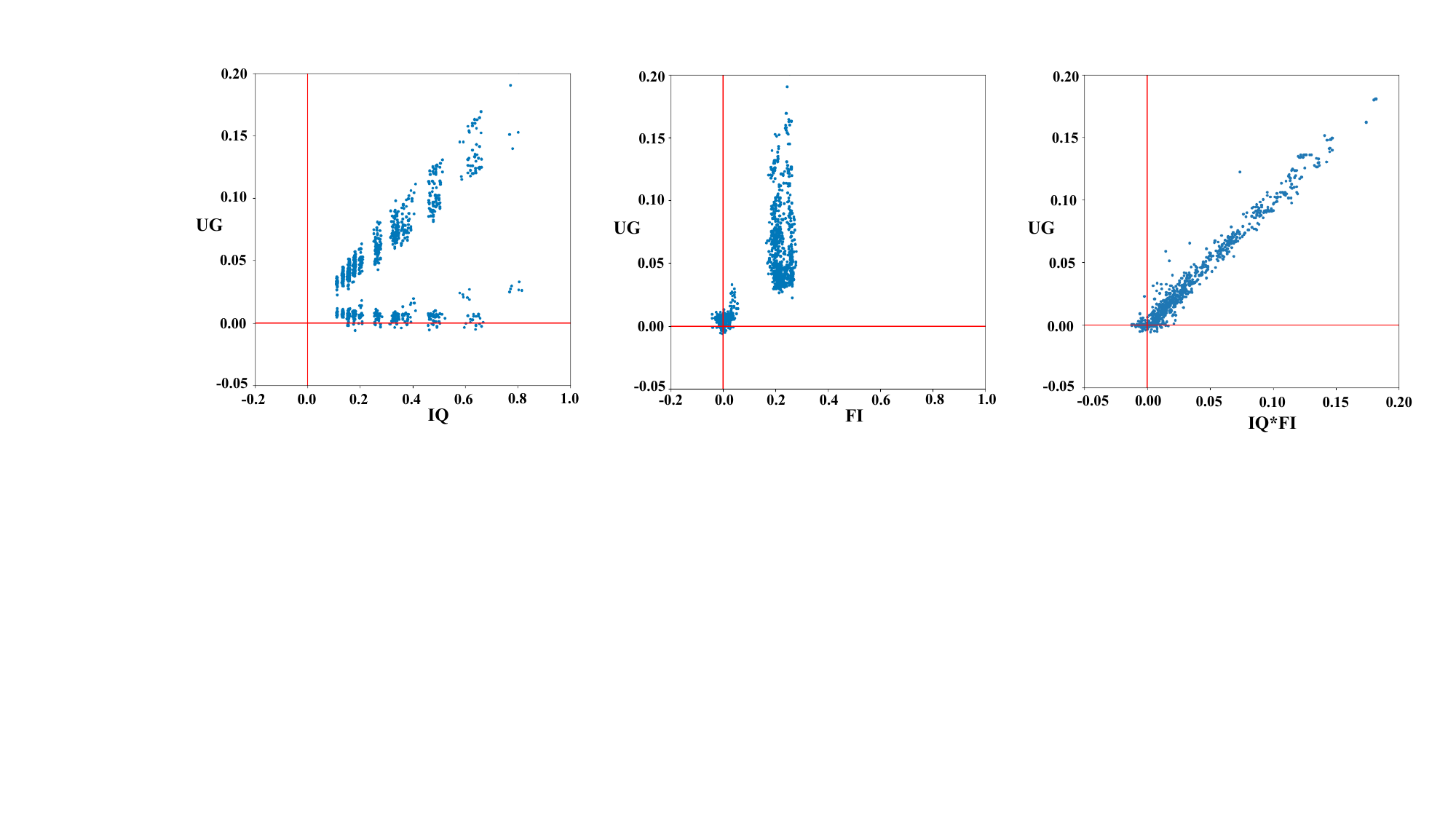}  
    \caption{Relationship between {\scriptsize $FI$} and {\scriptsize $UG$}}
    \label{fig4:fiug}
\end{subfigure}
\begin{subfigure}{.30\textwidth}
    \centering
    \includegraphics[width=\linewidth]{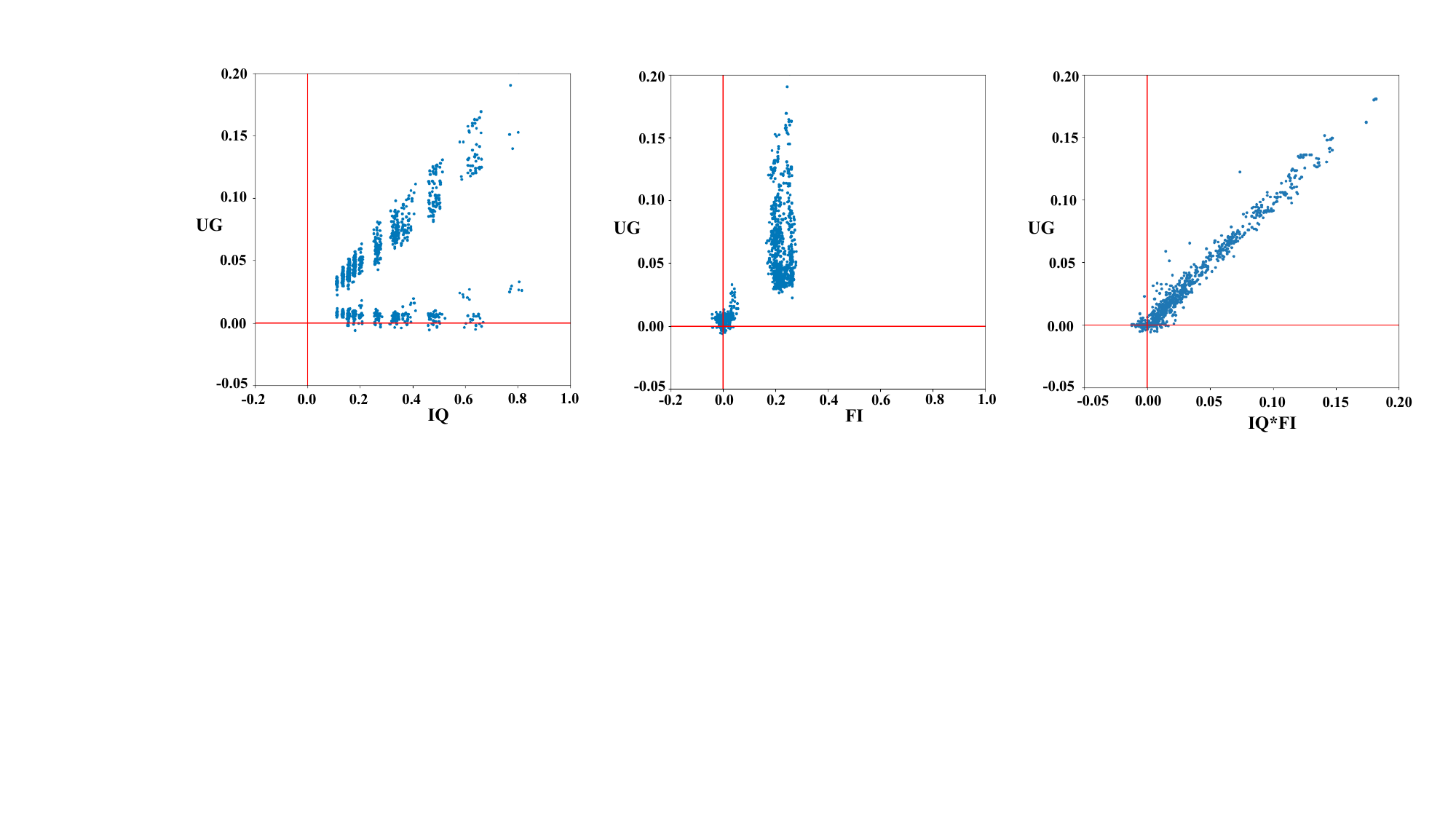}  
    \caption{Linear relationship between {\scriptsize  $IQ{\cdot}FI$ and $UG$}}
    \label{fig4:fiiqug}
\end{subfigure}
% \vspace{-5pt}
\caption{Utility gain, feature importance, and integration quality.}
\label{fig4}
\end{figure*}
% ==================================================

% \vspace{-5pt}
\para{Empirical Validation.}
We conduct an empirical study to validate the derived relationship in Theorem~\ref{th:fi&iq}. Specifically, we choose \textsf{DonorsChoose} dataset\footnote{\url{https://www.kaggle.com/c/donorschoose-application-screening}} from Kaggle and randomly sample 1,000 candidate features and their corresponding join paths. We use these sampled features to augment the base table and compute the metrics $UG$, $IQ$, and $FI$, subsequently plotting the data points in Figure~\ref{fig4}. We have two observations: (1) lower $IQ$ and $FI$ values lead to lower $UG$ values, and (2) $IQ\cdot FI$ has a linear relationship with $UG$, which are consistent with Theorem~\ref{th:fi&iq}.

\begin{theorem}
\label{th:ugiqfi}
The problem of utility gain maximization
\[ \max UG(T_{aug}) = \max \sum_{T_i \in \mathcal{T'},\ P_k \in \mathcal{P'}} UG(T_{aug_i})\]
can be approximated by
\[ \max \sum_{T_i \in \mathcal{T}',\ P_k \in \mathcal{P'}} IQ(T_{aug_i}) \times FI(f_{i,j}), \]
where $f_{i,j}\in T_i$ denotes a candidate feature $f_{i,j}$ in $T_i$ and $T_{aug_i}$ represents the augmented table \( Augment(T_{base}, f_{i,j}, P_k) \).
\end{theorem}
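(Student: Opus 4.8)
The plan is to reduce the aggregate statement to the single‑feature identity already proved in Theorem~\ref{th:fi&iq}, and then to control how the per‑term approximation errors accumulate over the selected set. First I would make explicit the modeling assumption implicit in the way the left‑hand objective is written: the utility gain of the fully augmented table $T_{aug}$ is taken to be the sum of the individual single‑feature gains $UG(T_{aug_i})$ over the chosen tables $T_i\in\mathcal{T}'$ with their chosen paths $P_k\in\mathcal{P}'$. I would state this additivity assumption at the top of the proof and, if a justification is wanted, note that each $T_{aug_i}$ is evaluated as an independent single‑feature augmentation of $T_{base}$, so that Theorem~\ref{th:fi&iq} applies to it verbatim.

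Next I would invoke Theorem~\ref{th:fi&iq} termwise. For each selected $i$ we may write
\[ UG(T_{aug_i}) = FI(f_{i,j}) \times IQ(T_{aug_i}) + \eta_i, \]
where $\eta_i$ is precisely the approximation slack isolated inside the proof of that theorem (the difference between replacing $US(\textit{non-NA})$ by $US(T_{virt(i,j)})$), so that by Lemma~\ref{lm.error_bound} its absolute expectation satisfies $E(|\eta_i|)\le 2/\sqrt{|T_{virt(i,j)}|}$ and its variance satisfies $Var(|\eta_i|)\le 1/|T_{virt(i,j)}|$. Summing over the selected features gives
\[ \sum_i UG(T_{aug_i}) = \sum_i IQ(T_{aug_i})\,FI(f_{i,j}) + \sum_i \eta_i, \]
and I would bound the aggregate error by $|\sum_i\eta_i|\le\sum_i|\eta_i|$, whose expectation is at most $2|\mathcal{T}'|/\sqrt{N_{\min}}$ with $N_{\min}=\min_i |T_{virt(i,j)}|$; a Chebyshev argument on the (assumed uncorrelated) $\eta_i$ upgrades this to a high‑probability bound of the same order.

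Then I would compare the two optimization problems. Letting $\mathcal{S}^\star$ be an optimizer of $\sum_i UG(T_{aug_i})$ and $\mathcal{S}'$ an optimizer of the surrogate $\sum_i IQ(T_{aug_i})\,FI(f_{i,j})$, the value of $\mathcal{S}'$ under the true objective differs from the optimum by at most twice the uniform error bound above, i.e.\ by $O(|\mathcal{T}'|/\sqrt{N_{\min}})$, which tends to $0$ whenever the virtual tables are large relative to the number of selected features. This is the precise sense in which the surrogate maximization "approximates" utility‑gain maximization, and I would record the conclusion with that explicit additive error term rather than as an exact equality.

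The main obstacle I anticipate is not the termwise substitution, which is immediate, but making the argmax‑preservation step honest: the accumulated error scales linearly in $|\mathcal{T}'|$, so the equivalence is only meaningful in the regime $|\mathcal{T}'| = o(\sqrt{N_{\min}})$, and the Chebyshev step needs the $\eta_i$ to be (at least pairwise) uncorrelated across features — a property the single‑feature analysis of Lemma~\ref{lm.error_bound} does not supply on its own. I would therefore either add these two hypotheses explicitly, or, more conservatively, state Theorem~\ref{th:ugiqfi} as an approximation with the quantified slack $O(|\mathcal{T}'|/\sqrt{N_{\min}})$ carried through, and remark that in the empirical regime of the paper (large candidate tables, modest feature budgets) this slack is negligible, consistent with Figure~\ref{fig4}.
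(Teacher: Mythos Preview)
Your proposal is correct and, in fact, considerably more careful than the paper. The paper gives no explicit proof of Theorem~\ref{th:ugiqfi} at all: it is stated immediately after Theorem~\ref{th:fi&iq} and the text simply moves on, treating it as a self-evident consequence of the single-feature decomposition. Implicitly, the paper is doing exactly the termwise substitution you describe---it writes the objective as $\sum_i UG(T_{aug_i})$ in the very statement (thereby \emph{assuming} additivity, as you correctly flag), and then relies on $UG(T_{aug_i})\approx FI(f_{i,j})\cdot IQ(T_{aug_i})$ from Theorem~\ref{th:fi&iq} for each summand. No error aggregation, no argmax-preservation argument, and no regime condition like $|\mathcal{T}'|=o(\sqrt{N_{\min}})$ appear anywhere.

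So the comparison is this: both you and the paper reduce to Theorem~\ref{th:fi&iq} applied termwise, but you go further by (i) naming the additivity assumption explicitly, (ii) carrying the per-term slack $\eta_i$ from Lemma~\ref{lm.error_bound} through the sum, and (iii) asking what it takes for the surrogate argmax to be close to the true one. The paper's ``proof'' buys brevity; yours buys an honest quantification of when the approximation is meaningful. The concerns you raise about uncorrelated $\eta_i$ and the linear-in-$|\mathcal{T}'|$ error growth are real gaps in the paper's treatment, not in yours---the paper simply does not address them and leans on the empirical validation in Figure~\ref{fig4} instead.
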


With Theorem~\ref{th:ugiqfi}, we establish a theoretical foundation to transform the problem of \textit{utility gain maximization} into two sub-problems that are more tractable individually. In the following sections, we will describe integration quality and feature importance estimation and an optimized search algorithm that effectively prunes the feature augmentation space based on the properties of integration quality and feature importance.

\section{Integration Quality \& Feature Importance Estimation}
\label{sec:estimation}

The utility gain has been shown to have a linear relationship with integration quality ($IQ$) and feature importance ($FI$) in Theorem~\ref{th:fi&iq}. However, both $IQ$ and $FI$ are computational expensive to obtain as they require resource intensive joins and model training, particularly when dealing with a large number of features and their associated join paths. As a result, estimating $IQ$ and $FI$ to avoid costly joins and model training becomes imperative. In this section, we delves into our proposed solution for $IQ$ and $FI$ estimation as well as data selection strategy for estimation model training.

\subsection{Integration Quality Estimation}
\label{sec:estimation:iq}

According to Definition~\ref{def:iq}, the $IQ$ of a join path, \( P = T_{base} \bowtie_{0,1} T_1 \bowtie_{1,2} T_2 \cdots \bowtie_{n-1,n} T_n \), is determined by the number of instances in \( T_{base} \) that can be augmented by the instances in the feature column $c_{n,j}$ from $T_n$. This essentially boils down to the \textit{sequence of join operations} in a path. Inspired by cardinality estimation in database systems \cite{10.14778/3368289.3368296,10.14778/3551793.3551859}, we introduce a Long Short-Term Memory (LSTM)-based model designed specifically to estimate the integration quality of a given join path. The idea is to leverage the pairwise table features and statistics to estimate the $IQ$ without materializing the join. In the following, we introduce the input features extracted from join operations for LSTM network training.

\para{Transitivity.} Intuitively, the transitivity of a join between two tables, denoted as \( T_1 \bowtie_{1,2} T_2 \), indicates the number of \textit{unique} join keys in $T_2$ that can be found in $T_1$. We formally define it as \( \frac{\{T_{i}.c_{i,j}\}\ \cap\ \{T_{i+1}.c_{i+1,j}\}}{\{T_{i+1}.c_{i+1,j}\}} \), where $\{T_i.c_{i,j}\}$ denotes the set of keys in $T_i$ that can join with the set of keys in $T_{i+1}$. For instance, if the join keys $\{T_1.c_{1,2}\}$ are \{1,2,...,100\} and $\{T_2.c_{2,2}\}$ are \{81,82,...,120\}, then the transitivity between $T_1$ and $T_2$ is 50\%. And the transitivity of a join path can be aggregated by multiplying pairwise transitivity of each join operation. For a join path \( P = T_{base} \bowtie_{0,1} T_1 \bowtie_{1,2} T_2 \), where $Transitivity(T_{base} \bowtie_{0,1} T_1)=20\%$ and $Transitivity(T_1 \bowtie_{1,2} T_2) = 50\%$, the transitivity of $P$ is 20\% $\times$ 50\% = 10\%.

Transitivity is a critical feature when the join keys are uniformly and independently distributed. However, it may not be sufficient when the assumptions of uniformity and independence of join keys do not hold. To mitigate this limitation, we introduce additional features associated with join operations. 

\para{Variance.} The variance of a join \( T_{i} \bowtie_{i,j} T_j \) is defined as 
\[ Var(T_{ij}.c_k) = \frac{1}{|T_{ij}.c_k|}\sum_{l=1}^n (x_l-\mu)^2, \]
where $T_{ij}$ is the joined table, $c_k$ is the join column between $T_i$ and $T_j$, $x_l$ is the instance $c_{k}$, and \( \mu = \frac{1}{|T_{ij}.c_k|}\sum_{l=1}^n x_l \). Intuitively, a small variance implies a lack of diversity in join results, signifying potential bias that compromises transitivity in subsequent joins.

\para{Entropy.} The entropy of a join \( T_{i} \bowtie_{i,j} T_j \) is defined as 
\[ Ent(T_{ij}.c_k) = -\sum_{l=1}^\mathcal{X} P(x_l) \cdot \log(1-P(x_l)), \]
where $T_{ij}$ is the joined table, $c_k$ is the join column between $T_i$ and $T_j$, and $\mathcal{X}$ denotes the unique values in the join column $c_k$. While variance is suitable for numeric columns, entropy can serve as a complementary feature for textual and categorical columns. 

\para{KL-divergence.} This metric quantifies the column distribution difference before and after a join, defined as
\[ KL(T_j.c_k, T_{ij}.c'_k) = \sum_{x\in \mathcal{X}} P(x) \cdot \log(\frac{P(x)}{Q(x)}), \]
where $T_j.c_k$ is the column before join, $T_{ij}.c'_k$ is the column after join, and $\mathcal{X}$ denotes the unique values in a categorical column $T_j.c_k$ or the binning of a numeric column $T_j.c_k$. A large KL-divergence indicates potential bias introduced by the join.

\para{Pearson Correlation Coefficient.} This metric is a normalized measurement of the covariance, capturing the positive or negative linearity between two join columns. The Pearson Correlation Coefficient can be computed as
\[ PC(T_i.c_j, T_{i+1}.c_k) = \frac{cov(T_i.c_j)}{\sqrt{var(T_i.c_j)var(T_{i+1}.c_k)}}, \]
where $T_i.c_j$ and $T_{i+1}.c_k$ are two join columns in $T_i$ and $T_{i+1}$.

\para{Mutual Information.} Similar to Pearson Correlation Coefficient, mutual information quantifies the shared information between two columns, which can be computed as
\[ MI(T_i.c_j, T_{i+1}.c_k) = \sum_{x\in T_i.c_j}\sum_{y\in T_{i+1}.c_k} P(x,y)log(\frac{P(x,y)}{P(x)P(y)}), \]
where $T_i.c_j$ and $T_{i+1}.c_k$ are two join columns in $T_i$ and $T_{i+1}$, respectively. Mutual information evaluates both linear and non-linear relationships between two join columns.

% ==================================================
\begin{figure}
  \centering
  \includegraphics[width=\columnwidth]{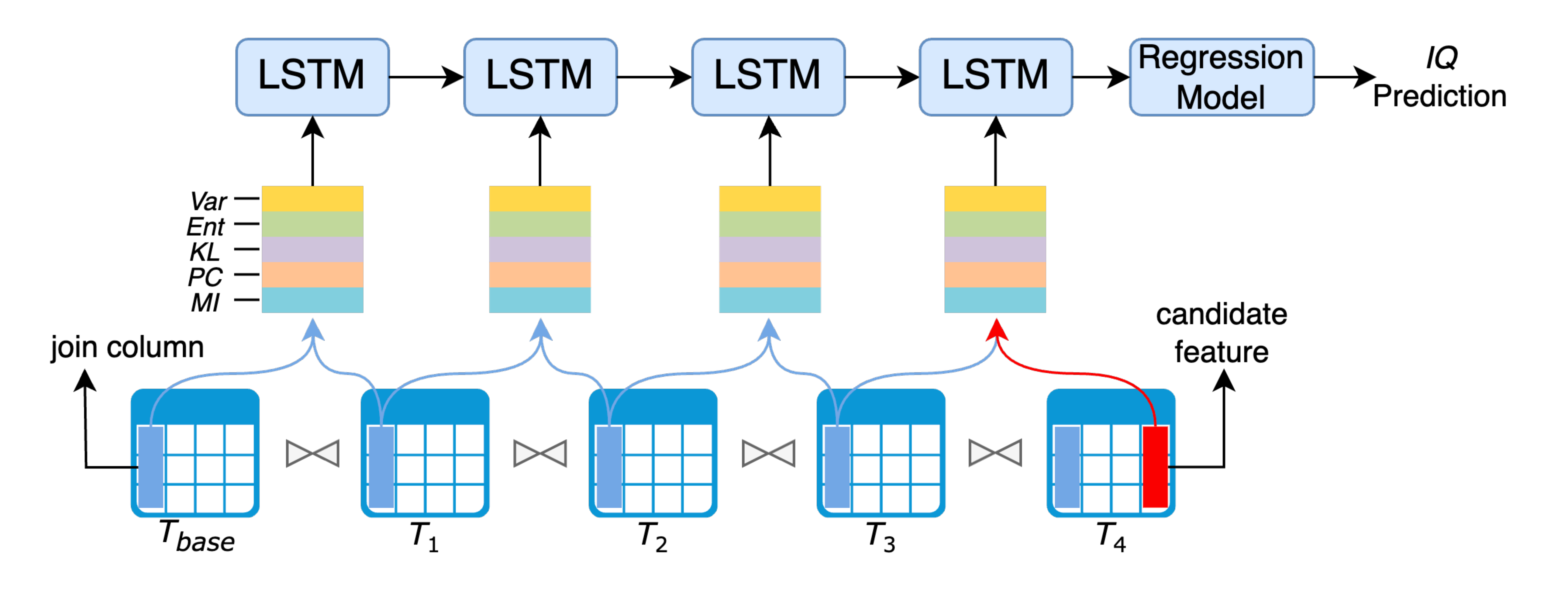}
  % \vspace{-20pt}
  \caption{LSTM network for $IQ$ estimation.}
  % \vspace{-10pt}
  \label{fig:lstm}
\end{figure}
% ==================================================

We use the above feature characteristics of join operations to train a Long Short-Term Memory (LSTM) model for $IQ$ estimation. The LSTM architecture is chosen due to its lightweight yet powerful capability in capturing the inherent order and dependencies within a sequence of joins in a join path. As depicted in Figure~\ref{fig:lstm}, the features, namely $Var$, $Ent$, $KL$, $PC$ and $MI$, are concatenated together to represent each join operation \( T_i \bowtie_{i,j} T_j\). Note that we use the feature column $f_{i,j}$ instead of the join column in $T_4$ to compute the above features as it determines the ultimate $IQ$.

The training objective of the LSTM network using Mean Squared Error (MSE) as the loss function can be represented as follows:
\[ \min_{\theta} \frac{1}{N} \sum_{i=1}^{N} (y_i - \hat{y}_i(\theta))^2, \]
where \( N \) is the number of training examples, \( y_i \) is the true $IQ$ for the \( i \)-th join path sample, and \( \theta \) denotes the parameters of the LSTM model, and \( \hat{y}_i(\theta) \) is the predicted value by the model parameterized by \( \theta \). The training objective is to minimize the MSE loss by adjusting the weights of the LSTM. The method of selecting data points for training the LSTM network will be described in Section~\ref{sec:estimation:data}.

\subsection{Feature Importance Estimation}
\label{sec:estimation:fi}

Feature importance estimation has been studied~\cite{NIPS2017_8a20a862,hooker2018evaluating,vskrlj2020feature} in recent years. However, these methods are often computationally expensive and model specific. Hence we introduce a simple and effective clustering-based method to estimate the feature importance. The intuition is that features with similar representations in the embedding space tend to have a similar impact (i.e., feature importance) on the ML task~\cite{NIPS2017_8a20a862,metam}. The clustering algorithm exploits this observation to compute the representative features' $FI$ in each cluster and estimate $FI$ of other features based on their proximity to the representative ones, which leads to substantial computational savings. In the following, we explain four steps to estimate feature importance in detail.

\para{Feature Representation.} The representation of a feature column $c_{i,j}$ consists of two parts: (1) column metadata, and (2) column data instances. 
We use a pre-trained language model (PLM) such as BERT~\cite{devlin2018bert} to generate the embeddings of column metadata such as column names, column semantic types, column description, if available. For data instances in a feature column, we sample the column, serialize the sample data, and use a PLM to generate the embeddings as well. Finally, we concatenate the embeddings of both column metadata and data instances as the feature representation.

\para{Feature Clustering.} We design the clustering method based on DBSCAN~\cite{ester1996density}. To be specific, we first randomly select an unexplored feature and check the number of features that lie within its $\epsilon$-neighborhood, where $\epsilon$ denotes the threshold of cosine similarity between two feature representations. Then we check if the $\epsilon$-neighborhood of this selected feature contains at least $m$ features. 
% \xiao{if we check $m$ then there wont be any cluster with less then $m$ features}. 
If so we form a cluster by adding all unexplored features within the selected feature's $\epsilon$-neighborhood. We iterate this process until all features have been assigned to a cluster. Unlike DBSCAN, we do not consider any feature as noise. Namely, the remaining unexplored features will be considered as clusters of their own. 
% Every feature will be assigned to one cluster and certain feature clusters can have less than $m$ features. 
The advantage of this lightweight method is that it bounds the similarity between the centroid feature and other candidate features within a cluster, i.e., \( cos(Repr(centroid), Repr(candidate_i)) \leq \epsilon \). 
% Consequently, we use the centroid of each cluster as the representative feature.

\para{Compute $FI$.} According to Definition~\ref{def.fi}, we need to follow one or more join paths to integrate a feature to the base table in order to compute its $FI$. Naively, one could utilize all join paths between the base table and the feature table to obtain its maximum $FI$, which is very costly. In Section~\ref{sec:estimation:data}, we will introduce a data selection strategy that chooses features and their associated join paths for both $FI$ computation and LSTM network training.

\para{Estimate $FI$ for Unseen Features.} For a unseen feature, we estimate its $FI$ by a weighted average over the $FI$ of its surrounding features in the same feature cluster. Precisely, we have
\[ FI(c) = \frac{1}{|N|}\sum_{n=1}^N FI(c_n) \cdot cos(c, c_n), \]
where $c$ denotes the unseen feature, $N$ denotes the set of features with known $FI$ in the same cluster, and $cos(\cdot)$ represents the cosine similarity between the unseen feature and its neighboring feature. For example, Figure~\ref{fig:cluster} illustrates $f_{1,2}$'s $\epsilon$-neighborhood. The distances between the features are calculated using cosine similarity over their feature representations. The $FI$ scores of the feature represented in orange circles $f_{1,2}$, $f_{2,4}$ and $f_{3,6}$ are known. The $FI$ scores of the unseen features $f_{4,2}$ and $f_{5,3}$ (green circles) are estimated by referring to the three explored features.

% ==================================================
\begin{figure}
  \centering
  \includegraphics[width=0.8\columnwidth]{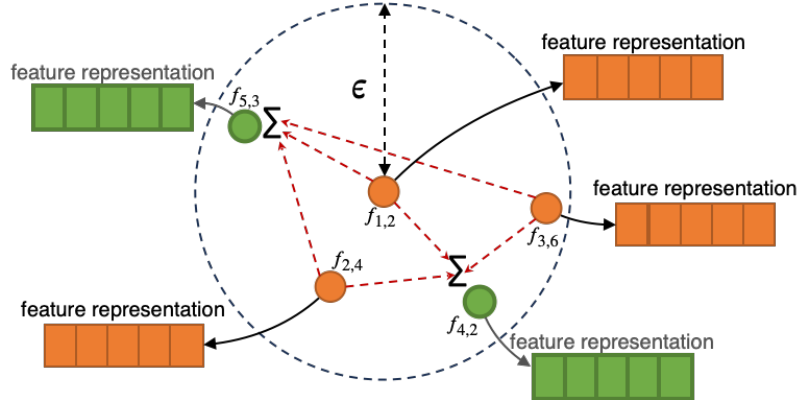}
  % \vspace{-5pt}
  \caption{Clustering-based $FI$ estimation.}
  % \vspace{-10pt}
  \label{fig:cluster}
\end{figure}
% ==================================================

\subsection{Data Selection for Estimation}
\label{sec:estimation:data}

Now we introduce an effective data selection strategy for $IQ$ model training and $FI$ estimation, both of which require expensive join executions according to the chosen join paths. In addition, $FI$ needs to feed the augmented table to an AutoML framework (e.g., AutoGluon) to obtain the actual feature importance. Hence our goal is to choose high quality data points to serve both $IQ$ and $FI$ estimations, in the form of \textit{feature-path} ($f_{i,j}$, $P_k$) pairs. Ideally, the selected data points should have two desired properties. First, the selected integration paths should have a good coverage over all joins among any candidate tables and the base table. The covered joins can be stitched together to accurately estimate unexplored integration paths. Second, the selected features should be evenly distributed across all feature clusters. The number of features chosen from one cluster should be proportioned to the size of the cluster.

To achieve two properties above, we consider the following factors regarding a feature-path pair and propose a weight function to rank all feature-path pairs.

\begin{enumerate}[leftmargin=*]
    \item Single join path coverage ($PC$) represents how much a join path has been explored. This is measured by the ratio between the number of join edges in a join path that have been explored and the total number of join edges in the path.
    \item $l$-hop join path group coverage ($PGC$) represents the number of join paths of a length $l$ that have been fully explored. This is measured by the ratio between the number of fully explored $l$-hop join paths and the total number of $l$-hop paths.
    % \item Average join edge visit time. This is measured by the average visited time along the path.
    \item Feature cluster coverage ($FC$) represents the number of features that have been examined in a feature cluster. This is measured by the ratio between the number of tested features and the total number of features in a cluster.
\end{enumerate}

% (1) Number of unexplored join edges. This is measured by the number of join edges along this path that haven't been included in the paths chosen before.
% (2) Average join edge visit time. This is measured by the average visited time along the path.
% (3) Join path length visit time. This is measured by the number of selected \textit{path} in the length.
% (4) Average cluster visit time. This is measured by the time that selected \textit{feature} is in the cluster divided by the size of the cluster.

Note that all the above factors range between [0, 1] and a lower value of each factor indicates that a path or a feature should be explored. Hence the weight function is defined as
\begin{equation}
w(f_{i,j}, P_k) = PC \times PGC \times FC.
\label{eq:weight}
\end{equation}

We assign a score to each feature-path pair using this weight function, rank them in an ascending order to select the one with the smallest weight in each iteration. After each iteration, we update the weights, re-rank the feature-path pairs and continue the selection until a given budget is exhausted. Algorithm~\ref{algo.datapointselection} details the greedy data point selection method.

% \chuan{we need an example here...}

\begin{algorithm}[t]
\caption{Greedy Data Point Selection}
\label{algo.datapointselection}
\footnotesize
\begin{algorithmic}[1] 
\Require All feature-path pairs (\textit{candidateSet}), the max number of paths to select (\textit{budget})
\Ensure The set of selected feature-paths (\textit{selected})
\Function {data\_point\_selection}{}
\State $unseleted \gets candidateSet$
\State $selected \gets \emptyset$
\State $path\_coverage \gets$ $init\_pc(unseleted$)
\State $path\_group\_coverage \gets$ $init\_pgc(unseleted$)
\State $cluster\_coverage \gets$ $init\_fc(unseleted$)
\State $weights \gets$ $init\_weight(unselected)$
% \State $avg\_visit \gets 0^n$
\While{$budget$ > 0 \& $unselected\neq \emptyset$}
    \State $(f, p) \gets min\_weight(unselected, w)$
    \State $selected \gets selected \cup \{(f, p)\}$ 
    \State $unselected \gets unselected - \{(f, p)\} $
    \For{$e \in p$}
        \State $path\_coverage[e] \gets True$
    \EndFor
    % \For{$(F,P) \in unseleted$}
    %     \State $avg\_visit(P)$ $\gets$ $avg(edge\_visit(e)), e \in P$
    % \EndFor
    \State $path\_group\_coverage[len(p)]$ += 1
    \State $cluster\_coverage[f.getCluster()]$ += 1
    \State $weights \gets$ $init\_weight(unselected)$
    \For{$(f,p) \in unselected$}
        \State $weights[(f,p)] \gets w(f, p, path\_coverage, \newline\hspace*{6em} path\_group\_coverage, cluster\_coverage)$
        % \State W(F,P)+=$\alpha \times (edge\_visit(e)==0, e \in P)$
        % \State W(F,P)+=$\beta \times \frac{\sum_{e \in P} edge\_visit(e) }{|P|}$
        % \State W(F,P)+=$\gamma \times length\_visit(|P|)$
        % \State W(F,P)+=$\theta \times cluster\_visit(F)$
    \EndFor
    \State $budget \gets budget-1$
\EndWhile
\State \Return{$selected$}
\EndFunction        
\end{algorithmic}
\end{algorithm}

Algorithm~\ref{algo.datapointselection} first initializes several data structures including the ones to store the intermediate states of single join path coverage, $l$-hop join path group coverage, feature cluster coverage, and the weights for all unselected feature-path pairs (Lines 1-7). Then the feature-path pair with the smallest weight is chosen and removed from the set of unselected pairs (Lines 9-11). Note that we randomly select one feature-path pair if there are multiple ones with the smallest weight. Once a feature-path pair is chosen, the join path coverage, join path group coverage, and feature cluster coverage are updated accordingly (Lines 12-17). The weights of the remaining unselected pairs are recomputed based on Equation~\ref{eq:weight} (Lines 18-20). We iterate this process until the given \textit{budget} is exhausted or all feature-path pairs are selected. 

% \chuan{discussion on feature correlation and how it affects integration quality estimation...}

\section{Feature Search and Integration}
\label{sec:exploit}

In this section, we present \method's search algorithm that efficiently and effectively exploits the integration quality and feature importance estimations to identify an augmentation plan that maximizes the utility gain of a ML task.

\subsection{Join Graph Model}
\label{sec:exploit:graph}

We first introduce \method join graph model to capture the search space of all integration plans.

\begin{definition}[\method Join Graph]
A \method join graph $\mathcal{G} = (V, E)$ is a unweighted and undirected graph with a set of vertices and a set of edges. A vertex $v \in V$ represents a table, consisting of a set of features $F = \{f_1,...,f_n\}$, and an edge $e \in E$ represents a join between two tables.
\end{definition}

% Consider a \textbf{Graph G}, representing the relational connections between base and candidate tables. Each \textbf{vertex (v)} denotes a table, and each \textbf{undirected edge (e)} signifies an available join between two tables. On each vertex \textit{v}, there exist \textbf{features} $F_{v}(1),...,F_{v}(n_v)$. 

% The \textbf{Augmentation Budget ($B$)} denotes the permissible number of augmentations under computational constraints. Each augmentation integrates a destination feature through a join path. The \textbf{Utility function} (Score) quantifies the value brought by a set of augmentations, obtainable by materializing these augmentations and querying the downstream ML model.

Given a join graph $\mathcal{G}$, a source vertex $v_b$ (i.e., the base table), and a given budget $B$, the goal is to identify no more than $B$ features and their respective integration paths in $\mathcal{G}$ to maximize the utility score of the ML task. Note that each path starts from $v_b$ and ends with a feature $f_{i,j}$ associated with a vertex $v_i$.

% Now, let's formalize the problem: Given an undirected graph \textit{G=\{V,E\}} and a source point \textit{S}. Each vertex \textit{v} processes features $(F_v(1),...,F_v(n_v))$. The goal is to explore no more than \textit{Budget=B} paths. Each path $P_i$ starts from \textit{S}, leads to a vertex $v_i$, and involves a corresponding feature $F_{v_i}(k_i)$. The objective is to maximize the score determined by a utility function $Score((P_1,F_{v_1}(k_1))$, ..., $(P_b,F_{v_b}(k_b)))$.

Naively, we can follow the breadth-first search (BFS) approach to enumerate all integration paths from $v_b$. For a path $P_j$ leading to $v_i$, we form feature-path pairs for all features in $v_i$. Among all feature-path pairs, we generate all possible candidate combinations with $\leq B$ features. The utility score of each candidate combination is computed based on the $IQ$ and $FI$ estimation. The combination with the maximum utility score represents the optimal solution to the feature augmentation problem. While this naive solution guarantees an optimal solution, it suffers from the combinatorial explosion problem. Specifically, real-world datasets often consist of many candidate tables and join relationships, resulting in a dense join graph. For instance, the \textsf{School} dataset from our experiments has over 10,000 join paths. Worse yet, with $n$ candidate features, a budget of at most selecting $B$ of them, and on average $m$ integration paths for each feature, the number of combinations is \( \sum_{i=1}^B m\cdot\binom{n}{i} \) (e.g., the \textsf{School} dataset has more than $10^8$ combinations).

% \vspace{-5pt}
\subsection{Search and Refinement}
\label{sec:exploit:search}
% \vspace{-2pt}

We introduce two pruning principles below to progressively reduce the search space, preventing the search explosion in the early stage without sacrificing the quality of feature augmentation.

% \vspace{-2pt}
\para{Pruning Principle 1 ($IQ$ Monotonicity).} Based on Definition~\ref{def:iq}, the integration quality monotonically decreases as a path extends. In other words, if the $IQ$ of a path $P$ is too low, it is not beneficial to explore any path that has $P$ as its prefix. This fundamental principle is critical for path exploration with early termination.

% \vspace{-2pt}
\para{Pruning Principle 2 ($FI$ Lower Bound).} If the feature importance of a feature $f_i$ is too low, then $f_i$ and all integration paths to $f_i$ can be safely pruned. The rationale behind this principle is based on Theorem~\ref{th:fi&iq}. Specifically, if $f_i$'s $FI$ is too low, its $UG$ would still be negligible even there were a perfect integration path (i.e., $IQ$ = 1). Note that the lower bound of $FI$ can be progressively increased as we explore the search space and identify more useful features.

% \vspace{-2pt}
\para{Search with Progressive Pruning.}
Leveraging these two pruning principles, we introduce \method's search algorithm (Algorithm~\ref{algo:search}) for feature selection and their corresponding join path exploration. Algorithm~\ref{algo:search} employs a BFS-based approach to identify feature candidates and their integration paths iteratively. It starts with the base table (Line 2) and iteratively extends its search to $k$-hop features (Lines 4-25). 

During this process, we maintain a min-heap of size $H$ that keeps top-$H$ candidate feature-path pairs $candidates$, sorted by the $UG$ score in a descending order (Line 3). Note that the heap size $H$ is larger than the feature budget $B$ so as to maintain a superset of candidates for further refinement. In practice, we set $H$ to a multiple of $B$. We then iterate through each $k$-hop path in $P_{cur}$ (line 6), extending the path to include an additional vertex $v$ directly connected to the last vertex $u$ of the path (Lines 6-12). Subsequently, we compute the $IQ$ for the extended path ($p'$) and discard it if its $IQ$ is below $T_{IQ}$ (Lines 13-14). For the qualified $p'$, we add it to the set of $(k+1)$-hop paths $P_{next}$, and evaluate the $UG$ for each feature on vertex $v$. The feature-path pair with the lowest $UG$ score in $candidates$ is used as the lower bound. We update $candidates$ with the qualified feature-path pair (Lines 16-18). Finally, the algorithm returns the candidate set (Line 26).

% ==================================================
\begin{algorithm}[t]
\caption{Search Algorithm}
\label{algo:search}
\footnotesize
% \small
\begin{algorithmic}[1] 
\Require Join graph ($G=(E,V)$), $IQ$ threshold ($T_{IQ}$), Heap size $H$
\Ensure Feature-path candidates ($candidates$)
\Function {search}{}
\State $P_{cur} \gets \{(v_b)\}$
\State $candidates=minHeap(H)$ \Comment{The min-heap size of $candidates$ is $H$}
\While{$|P_{cur}|>0$}
    \State $P_{next} \gets \emptyset$
    \For{$p \in P_{cur}$}
        \State $u \gets$ $p.getLastV()$ \Comment{Get the last vertex of path $p$}
        \For {$v \in u.getNeighbor()$} \Comment{Get one-hop neighbors of $v$} 
            \If{$v \in p$}
            \State \textit{continue} \Comment{Avoid cyclic joins}
            \EndIf
            \State $p' \gets p \oplus v$
            \State $iq \gets estimateIQ(p')$
            \If {$iq \geq T_{IQ}$}
            \State {$P_{next} \gets P_{next} \cup p'$}
            \For{$f \in v.getFeatures()$}
                \If{$estimateFI(f)\times iq \geq candidates.minUG()$} 
                \State $candidates.insert((f,p'))$
                \EndIf
            \EndFor
            \EndIf
        \EndFor
    \EndFor
    \State $P_{cur} \gets P_{next}$
\EndWhile
\State \Return {$candidates$}
\EndFunction    
\end{algorithmic}
\end{algorithm}
% \vspace{-5pt}
% ==================================================

% ==================================================
\begin{algorithm}[t]
\caption{Refinement Algorithm}
\label{algo:augmentation}
\footnotesize
\begin{algorithmic}[1] 
\Require Feature-path pairs $candidates$, Feature budget $B$
\Ensure Feature augmentation plan $selected$
\Function {refinement}{}
\For {$c \in candidates$}
    \State $c.ug \gets c.fi\cdot c.iq$
\EndFor
% \State $UG \gets E[IQ](path_i) \times E[FI](feature_i)$
\State $unselected \gets candidates$
\State $selected \gets \emptyset$
\While{$B>0$}
    \State $B \gets B-1$
    \State $fp_{max} \gets unselected.getMaxUG()$
    \State $selected \gets selected \cup fp_{max}$
    \State $unselected \gets unselected - fp_{max}$
    \For {$c \in unselected$}
        \State $c.ug \gets$ $updateUG(c, selected)$
    \EndFor
\EndWhile
\State \Return {$selected$}
\EndFunction
\end{algorithmic}
\end{algorithm}
% ==================================================

\para{Feature Augmentation Refinement.}
% \label{sec:exploit:refinement}
Among the selected feature-path candidates, similar features in the same cluster are expected to have comparable effect on the downstream ML task~\cite{metam}. In light of this observation, we design a weight decay method to refine the selected feature candidates and to produce the final feature augmentation plan. In essence, this weight decay method aims to mitigate the local optimum incurred by choosing similar features during the search process. 

Specifically, the weight decay method relies on a scaling factor using feature similarity to regulate the $FI$ on related features, subsequently optimizing the overall efficacy of selected features in the downstream tasks. The weight decay function on a feature $f$ is
\[ FI(f) = \frac{1}{|N|}\sum_{i=1}^N FI(f_i) \cdot (1 - cos(f, f_i)), \]
where $N$ represents the set of already chosen features in the same cluster. 
The closer a feature to the other chosen features in the same cluster, the more penalty it receives. In the refinement process, we iteratively choose the feature-path pair with the maximum $UG$ and update the $UG$ of the remaining pairs based on the equation above. After $B$ iterations, we select top-$B$ feature-path pairs as the feature augmentation plan.

Algorithm~\ref{algo:augmentation} illustrates the refinement strategy in detail. It first initializes the estimated $UG$ based on $IQ$ and $FI$ of each candidate feature-path pair (Lines 2-4). During each iteration, it performs two key steps: (1) among unselected feature-path pairs, it selects the feature and corresponding path with the highest estimated $UG$ (Line 9); and (2) it updates the estimated $UG$ for remaining feature-path pairs based on the above equation (Line 13). This adjustment indicates that a higher similarity between a unselected feature and the already chosen ones leads to a lower $UG$ value for the unselected feature, implying a reduced contribution to the ML task. Finally, the selected feature-path pairs $selected$ are returned after a given feature budget $B$ is exhausted (Line 16).

\section{Experimental Evaluation}
\label{sec:exp}

\subsection{Experimental Setup}
\label{sec:exp:setup}

\para{Datasets.} We evaluate \method over five public datasets by performing classification and regression tasks. All five datasets are composed of base tables provided by Kaggle and the DARPA Data Driven Discovery of Models (D3M)\footnote{\url{https://datadrivendiscovery.org/about-d3m/}}. Given a base table, we searched open source datasets for joinable tables using NYU Auctus~\cite{auctus} to identify candidate tables for augmentation purpose.
\begin{itemize}[leftmargin=*]
    \item \textbf{School} is a dataset commonly used in baselines~\cite{arda,autof,metam} for a classification task. The target prediction is school performance on a standardized test based on student attributes, course attributes and other historical surveys.
    % We identified 1,065 candidate features for augmentation.
    \item \textbf{DonorsChoose\footnote{\url{https://www.kaggle.com/c/donorschoose-application-screening}}} is a Kaggle dataset for a classification task. The target prediction is whether or not a DonorsChoose proposal was accepted. We identified 122 candidate features for augmentation.
    \item \textbf{Fraud Detection\footnote{\url{https://www.kaggle.com/competitions/ieee-fraud-detection}}} is a Kaggle dataset  consisting of over 800 candidate features. The task is to predict whether an online transaction is fraudulent.
    \item \textbf{Poverty} consists of socioeconomic features like poverty rates, population change, unemployment rates, and education levels across U.S. States and counties. The task is to predict an index indicating poverty level.
    \item \textbf{Air} is a regression dataset from NYU Auctus and google aiming to predict the air quality of a city on a given date.
\end{itemize}
Note that tables from Kaggle are typically wide, consisting of many features already integrated with base tables. Hence we normalize these wide tables and consequently generate candidate tables with features more than 1-hop away from the original base tables. Table~\ref{tab:dataset_stat} provides the basic information about the three datasets and the numbers of immediate (1-hop) and distant features.

\begin{table}[t]
\centering
\resizebox{1\linewidth}{!}{\begin{tabular}{@{}l|c|c|c|c@{}}
\toprule
\textbf{Datasets} & \textbf{\#Tables} & \textbf{\#Columns} & \textbf{\#1-hop Features} & \textbf{\#Distant Features} \\ \midrule
\textsf{School}          & 121 & 1,295 & 440 & 625 \\
\textsf{DonorsChoose}    & 73  & 221   & 48  & 74  \\ 
\textsf{Fraud Detection} & 81  & 254   & 30  & 97  \\
\textsf{Poverty} & 98 & 408 & 93  & 117 \\
\textsf{Air} & 75  & 603 & 172 & 308 \\
\bottomrule
\end{tabular}}
\caption{Statistics of datasets.}
\vspace{-20pt}
\label{tab:dataset_stat}
\end{table}

% ==================================================
\begin{table*}[htbp]
% \vspace{-10pt}
\centering
\begin{tabular}{c|c|c|c|c|c|c|c}
\toprule
\multirow{2}{*}{\textbf{Datasets}} & \multirow{2}{*}{\textbf{Metric}} & \multirow{2}{*}{\textbf{Feature Budget}} & \multicolumn{5}{c}{\textbf{Methods}} \\
\cline{4-8}
&  &  & DFS & ARDA & AutoFeature & METAM & \method \\
\hline
\hline
\multirow{3}{*}{\textsf{School}} & \multirow{3}{*}{Accuracy} & 1 & 0.704{\scriptsize(4)} & 0.697{\scriptsize(5)} & 0.708{\scriptsize(3)} & 0.790{\scriptsize(2)} & \textbf{0.823}{\scriptsize(1)} \\
& & 5 & 0.700{\scriptsize(5)} & 0.808{\scriptsize(2)} & 0.704{\scriptsize(4)} & 0.801{\scriptsize(3)} & \textbf{0.891}{\scriptsize(1)} \\
& & 10 & 0.692{\scriptsize(5)} & 0.794{\scriptsize(3)} & 0.723{\scriptsize(4)} & 0.816{\scriptsize(2)} & \textbf{0.880}{\scriptsize(1)} \\
\hline
\multirow{3}{*}{\textsf{DonorsChoose}} & \multirow{3}{*}{Accuracy} & 1 & 0.655{\scriptsize(5)} 
& \textbf{0.856}{\scriptsize(1)} & 0.708{\scriptsize(3)} & 0.656{\scriptsize(4)} & 0.822{\scriptsize(2)} \\
& & 5 & 0.820{\scriptsize(4)} & 0.890{\scriptsize(2)} & 0.852{\scriptsize(3)} & 0.659{\scriptsize(5)} & \textbf{0.954}{\scriptsize(1)} \\
& & 10 & 0.854{\scriptsize(4)} & 0.901{\scriptsize(2)} & {0.896}{\scriptsize(3)}
& 0.820{\scriptsize(5)} &\textbf{0.961}{\scriptsize(1)} \\
\hline
\multirow{3}{*}{\textsf{Fraud Detection}} & \multirow{3}{*}{F1} & 1 & 0.068{\scriptsize(5)} & 0.416{\scriptsize(2)} & 0.145{\scriptsize(4)} & \textbf{0.437}{\scriptsize(1)} & 0.435{\scriptsize(3)} \\
& & 5 & 0.070{\scriptsize(5)} & 0.422{\scriptsize(3)} & 0.152{\scriptsize(4)} & 0.446{\scriptsize(2)} & \textbf{0.493}{\scriptsize(1)} \\
& & 10 & 0.084{\scriptsize(5)} & 0.450{\scriptsize(3)} & 0.162{\scriptsize(4)} & 0.464{\scriptsize(2)} & \textbf{0.540}{\scriptsize(1)} \\
\hline
\hline
\multirow{3}{*}{\textsf{Poverty}} & \multirow{3}{*}{MAE} & 1 & 13620.14 {\scriptsize(5)}  & 12389.54 {\scriptsize(2)}  & 13532.57 {\scriptsize(4)}   & 13077.66 {\scriptsize(3)}  & \textbf{8222.34} {\scriptsize(1)}  \\
& & 5 & 13410.07 {\scriptsize(4)}  & 12389.54 {\scriptsize(2)}  & 13532.57 {\scriptsize(5)}  & 12956.29 {\scriptsize(3)}  & \textbf{7322.44} {\scriptsize(1)}  \\
& & 10 & 13077.66 {\scriptsize(4)}  & 12164.23 {\scriptsize(2)}  & 13411.85 {\scriptsize(5)}  & 12786.82 {\scriptsize(3)}  & \textbf{7182.38} {\scriptsize(1)}  \\
\hline
\multirow{3}{*}{\textsf{Air}} & \multirow{3}{*}{MSE} & 1 & 1.184 {\scriptsize(4)} & \textbf{0.969} {\scriptsize(1)}  & 1.259 {\scriptsize(5)}  & 1.101 {\scriptsize(2)}  & 1.101 {\scriptsize(2)}  \\
& & 5 & 0.985 {\scriptsize(4)}  & 0.793 {\scriptsize(2)}  & 1.219 {\scriptsize(5)} & 0.900 {\scriptsize(3)}  & \textbf{0.762} {\scriptsize(1)}  \\
& & 10 & 0.943 {\scriptsize(4)} & 0.761 {\scriptsize(2)}  & 1.202 {\scriptsize(5)}  & 0.762 {\scriptsize(3)}  & \textbf{0.715} {\scriptsize(1)}  \\
\bottomrule
\end{tabular}
\caption{Baseline comparison against the state-of-the-art methods. The top ranked results are bold. The numbers in the parentheses indicate the result ranking on a particular dataset.}
\label{tab.main_exp}
% \vspace{-10pt}
\end{table*}
% ==================================================

\para{Baselines.} We compared \method with a variety of baselines, including heuristic-based solutions such as Deep Feature Synthesis and ARDA as well as machine learning-based solutions such as AutoFeature and METAM.

\begin{itemize}[leftmargin=*]
    \item Deep Feature Synthesis (DFS)~\cite{dfs} is a pioneer work that automatically generates features for relational datasets. It utilizes join paths in the data to a base table, and then applies mathematical functions along these paths to create the final features.
    \item ARDA~\cite{arda} is a feature augmentation system, which employs a variety of heuristics to select top-$k$ tables to join and uses a random injection-based feature augmentation to search candidate feature subsets.
    \item AutoFeature~\cite{autof} is a reinforcement learning based framework to augment the features following an exploration-exploitation strategy over the search space of candidate tables (features).
    \item METAM~\cite{metam} is a goal-oriented framework that queries the downstream task with a candidate dataset, using a feedback loop that automatically steers the discovery and augmentation process.
\end{itemize}

\para{Evaluation Metrics.} Since we evaluate \method using both classification and regression tasks, we adopt the commonly used metrics for measuring the effectiveness, namely:
\[ Accruacy = \frac{\#\, of\, correct\, predictions}{\#\, of\, predictions}, \text{and} \]
\[ F1 = \frac{TP}{TP+\frac{1}{2}(FP+FN)}, \]
where $TP$, $FP$ and $FN$ denote the number of \textit{true positive}, \textit{false positive} and \textit{false negative} predictions, respectively. In particular, we report Accuracy for the classification tasks on \textsf{DonorChoose} and \textsf{School} because of the balanced label distribution over the classes. We report F1 for the classification task on \textsf{Fraud Detection} since the label distribution is skew towards the negative classes and typically a fraudulent (positive) case is a more valuable prediction. For the regression tasks, we follow ~\cite{arda,autof} and report the Mean Absolute Error (MAE) or Mean Square Error (MSE) between the estimated and ground truth values:
\[ MAE = \frac{\sum^{n}_{i=1}|y_i-x_i|}{n} \text{ and } MSE = \frac{\sum^{n}_{i=1}(y_i-x_i)^2}{n}, \]
where $n$ is the number of estimated values, $y_i$ is the ground truth value for the $i$-th estimated value $x_i$. In ablation studies, we also measure the accuracy of each \method's component by reporting the MAE or MSE. Note that we use AutoGluon to automatically train multiple models (e.g., XGBoost model, KNN model, neural network models, etc.), and then ensemble the models to create the final predictor.

% Since we have two different types of tasks, i.e., classification and regression, we use Area Under Curve (AUC) and Mean Squared Error (MSE) for these two tasks, respectively.
% (1) AUC is a commonly used metric for classification tasks to assess the discriminative power of the predictive classification model. AUC is the area under the ROC curve. Thus, the value of AUC is between [0,1]. ROC curve is the plot of the rate of true positives (TPR, computed by TPR = TP/(TP+FN)) versus the rate of false positives (FPR, computed by FPR = FP/(TN+FP)) at different probability cutoffs. The higher the AUC score, the better the model.
% (2) MSE is often used to evaluate the performance of the regression model. MSE takes the distances from the normalized points of ground truth (y) to the regression line ($y_{\textit{predict}}$) and squaring them, i.e., \(MSE = \frac{1}{n}\sum_{i=1}^{n}(y-y_{\textit{predict}})^2\). The lower the MSE, the better the prediction.

\subsection{Baseline Comparison}
\label{sec:exp:main}

We first conduct an end-to-end evaluation of \method against the state-of-the-art solutions for feature augmentation. In this set of experiments, we study the effectiveness of \method on finding the most useful features from candidate tables within a given budget. The budge $B$ is defined as the number of features allowed to augment the base table. We vary the budget from 1, 5 to 10 and report the utility measurements of different methods on the datasets. As shown in Table~\ref{tab.main_exp}, when $B>$ 1, \method consistently demonstrates superior performance on finding and integrating features that contribute the most to the task on the base table. The reason is that \method not only accurately identifies useful features that are different hops away but also exploits these features with high-quality integration paths. 

When the budget is extremely limited (i.e., $B$ = 1), \method is still among the top 2 performing methods. In both \textsf{DonorChoose} and \textsf{Air} datasets, the most useful feature is from an 1-hop candidate table, which also presents high joinability to the base table. And this information is given to ARDA and hence it is able to outperform \method. METAM on the \textsf{Fraud Detection} dataset delivers the best performance by finding a 1-hop join path to the most useful feature. \method is slightly worse than METAM due to its $IQ$ and $FI$ estimation errors amplified by such an extreme budget.

\vspace{-7pt}
\subsection{Ablation Study}
\label{sec:exp:ablation}

To better understand how our key designs of \method benefit the overall performance feature augmentation, we conduct extensive ablation studies to evaluate the performance of \method's individual components and their variants.

\vspace{-3pt}
\subsubsection{Effectiveness of \method's Integration Quality Model.}

First, we evaluate the accuracy of \method's LSTM-based $IQ$ estimator. The ground truth is established by materializing the join paths and computing the corresponding $IQ$ scores according to Definition~\ref{def:iq}. We measure and report the MSE between the estimated values and the ground truth values. 
% Since \method's $IQ$ estimator is a learning-based estimator where we also propose a sampling method for selecting the training path, 
To make sense of the accuracy of the proposed $IQ$ estimator, we compare it against two baselines: (1) multiplication of \textit{Transitivity} scores between the tables on a join path (named Production), and (2) randomly predicting an $IQ$ score between 0 and 1 (named Random [0,1]). In addition, we include an LSTM-based $IQ$ estimator trained on randomly selected feature-path pairs and vary the number of training pairs.

\begin{figure*}
  \centering  
    \begin{subfigure}[b]{0.8\textwidth}
    \centering
    \includegraphics[width=\textwidth]{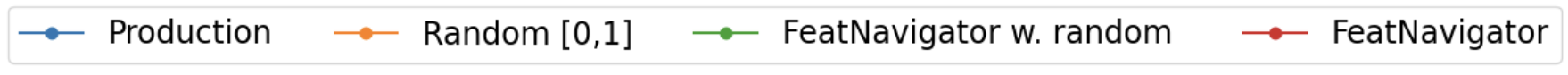}
    \end{subfigure}
    \hfill
    
  \begin{subfigure}[b]{0.19\textwidth}
    \centering
    \includegraphics[width=\textwidth]{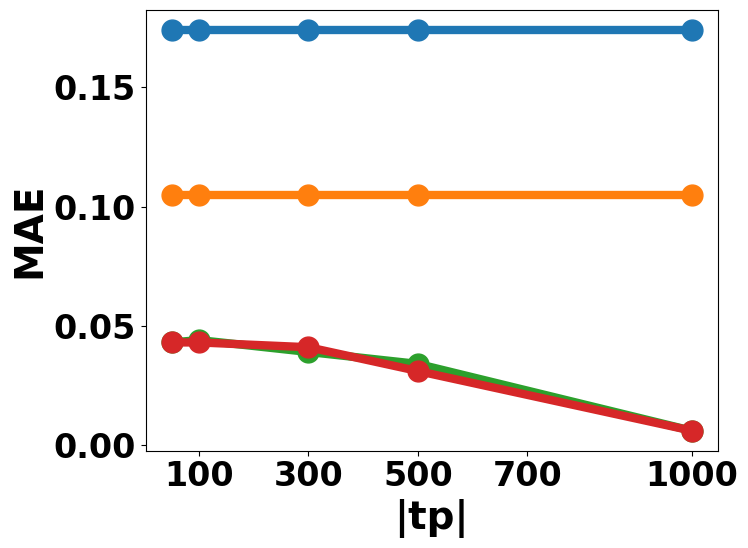}
    \caption{School}
    \label{fig:iq:school}
  \end{subfigure}
  \hfill
  \begin{subfigure}[b]{0.19\textwidth}
    \centering
    \includegraphics[width=\textwidth]{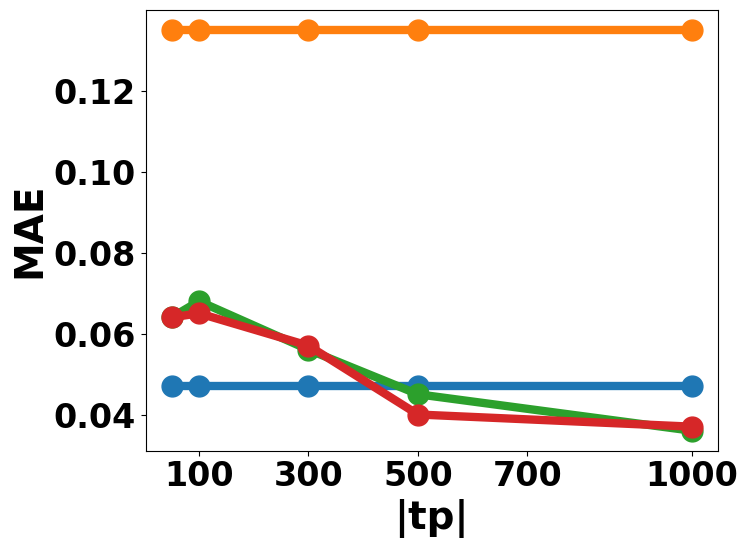}
    \caption{DonorsChoose}
    \label{fig:iq:donor}
  \end{subfigure}
  \hfill
  \begin{subfigure}[b]{0.19\textwidth}
    \centering
    \includegraphics[width=\textwidth]{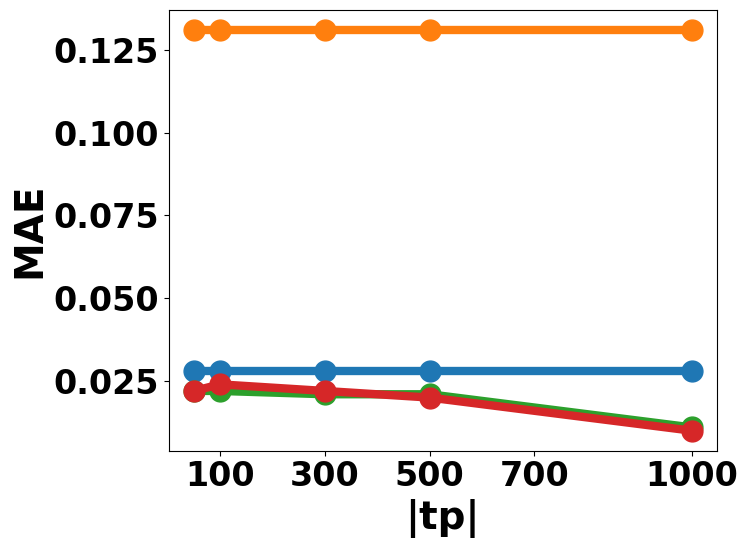}
    \caption{Fraud Detection}
    \label{fig:iq:transaction}
  \end{subfigure}
  \hfill
  \begin{subfigure}[b]{0.19\textwidth}
    \centering
    \includegraphics[width=\textwidth]{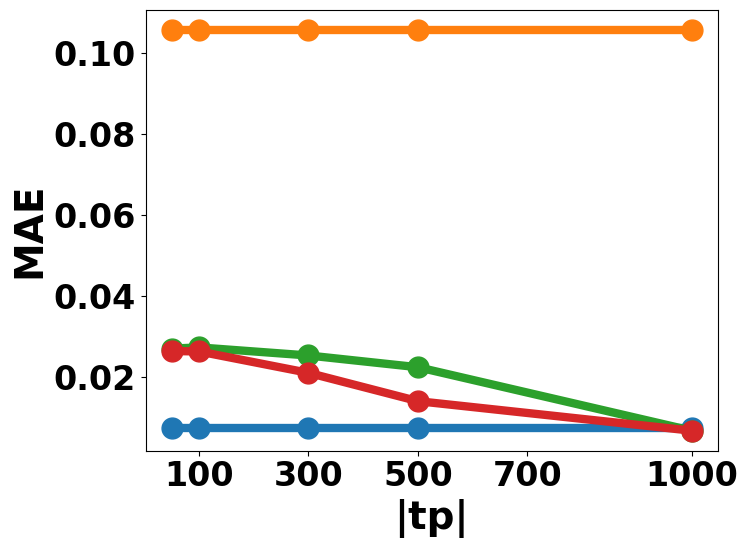}
    \caption{Poverty}
    \label{fig:iq:poverty}
  \end{subfigure}
  \hfill
  \begin{subfigure}[b]{0.19\textwidth}
    \centering
    \includegraphics[width=\textwidth]{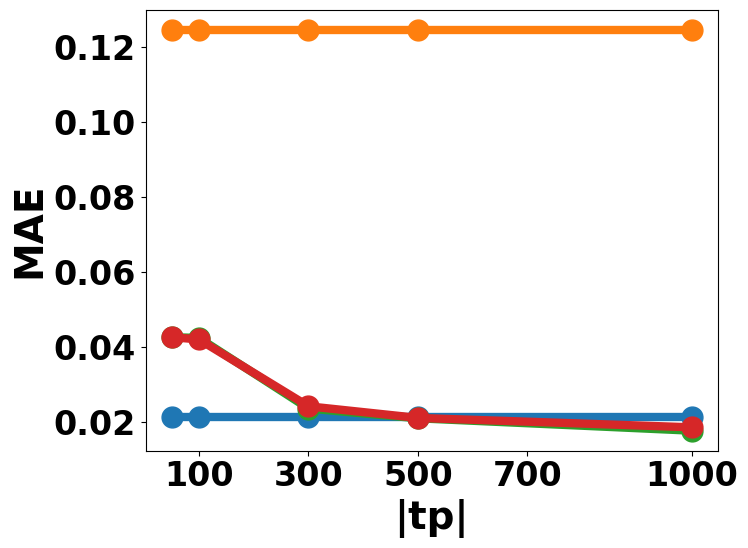}
    \caption{Air}
    \label{fig:iq:air}
  \end{subfigure}  
  \caption{Effectiveness of \method's integration quality model.}
  \label{fig:iq_budget_exp}
\end{figure*}
\begin{figure*}
  \centering
  \begin{subfigure}[b]{0.8\textwidth}
    \centering
    \includegraphics[width=\textwidth]{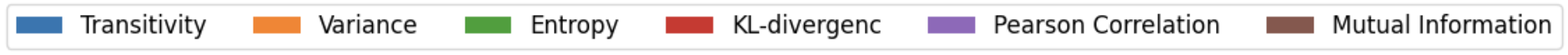}
  \end{subfigure}
  \hfill

  \begin{subfigure}[b]{0.19\textwidth}
    \centering
    \includegraphics[width=\textwidth]{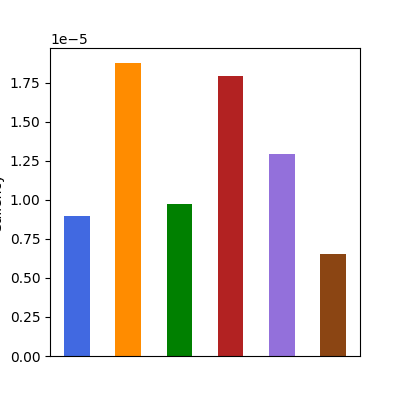}
    \caption{School}
    \label{fig:school}
  \end{subfigure}
  \hfill
  \begin{subfigure}[b]{0.19\textwidth}
    \centering
    \includegraphics[width=\textwidth]{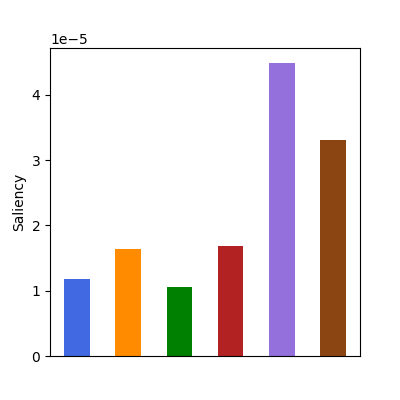}
    \caption{DonorsChoose}
    \label{fig:donor}
  \end{subfigure}
  \hfill
  \begin{subfigure}[b]{0.19\textwidth}
    \centering
    \includegraphics[width=\textwidth]{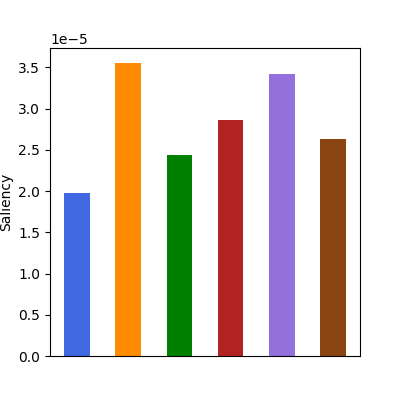}
    \caption{Fraud Detection}
    \label{fig:transaction}
  \end{subfigure}
  \hfill
  \begin{subfigure}[b]{0.19\textwidth}
    \centering
    \includegraphics[width=\textwidth]{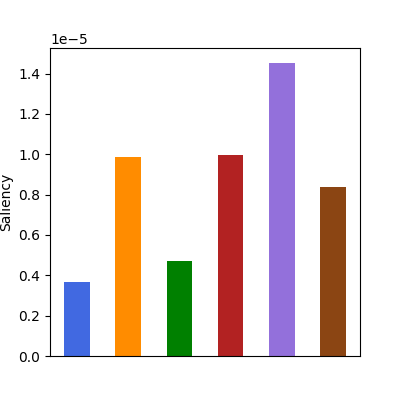}
    \caption{Poverty}
    \label{fig:poverty}
  \end{subfigure}
  \hfill
  \begin{subfigure}[b]{0.19\textwidth}
    \centering
    \includegraphics[width=\textwidth]{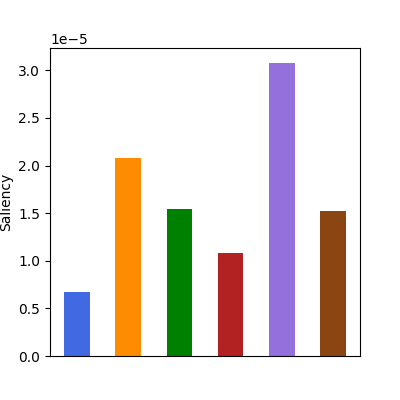}
    \caption{Air}
    \label{fig:air}
  \end{subfigure}
  \caption{Saliency analysis of features used in integration quality model.}
  \label{fig:saliency}
\end{figure*}

\begin{figure*}
  \centering
  \begin{subfigure}[b]{0.8\textwidth}
    \centering
    \includegraphics[width=\textwidth]{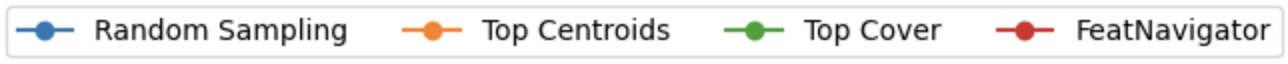}
  \end{subfigure}
  \hfill

  \begin{subfigure}[b]{0.19\textwidth}
    \centering
    \includegraphics[width=\textwidth]{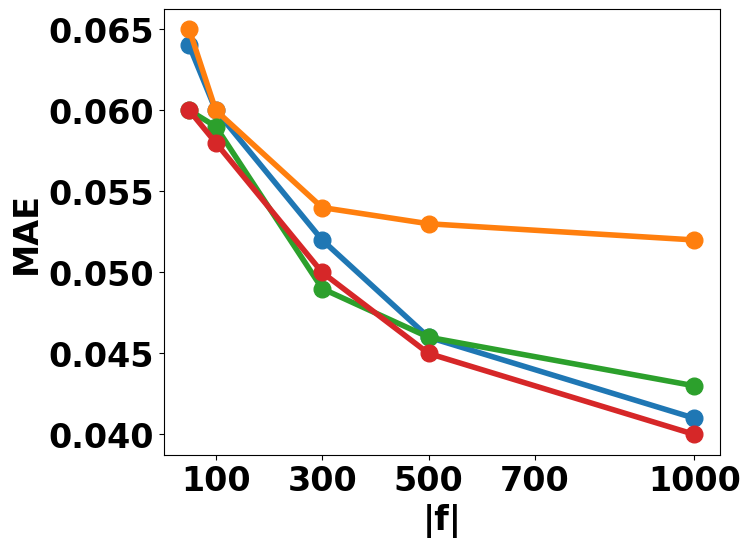}
    \caption{School}
    \label{fig:school}
  \end{subfigure}
  \hfill
  \begin{subfigure}[b]{0.19\textwidth}
    \centering
    \includegraphics[width=\textwidth]{figures/abla_iq/donor}
    \caption{DonorsChoose}
    \label{fig:donor}
  \end{subfigure}
  \hfill
  \begin{subfigure}[b]{0.19\textwidth}
    \centering
    \includegraphics[width=\textwidth]{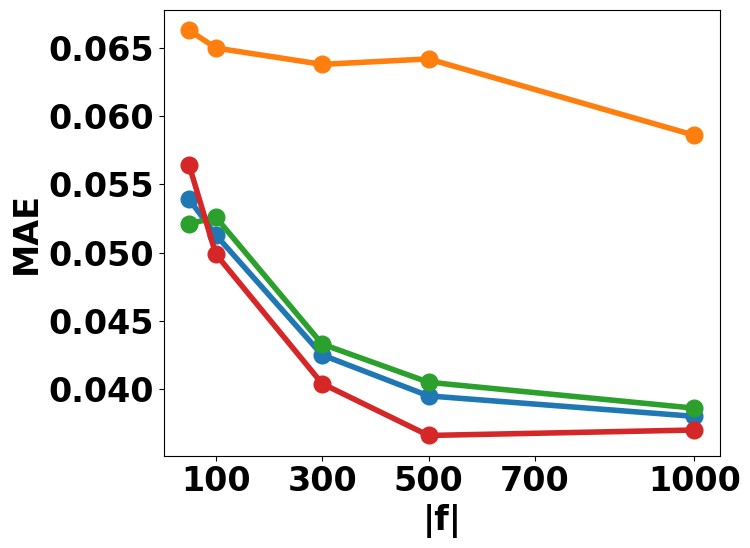}
    \caption{Fraud Detection}
    \label{fig:transaction}
  \end{subfigure}
  \hfill
  \begin{subfigure}[b]{0.19\textwidth}
    \centering
    \includegraphics[width=\textwidth]{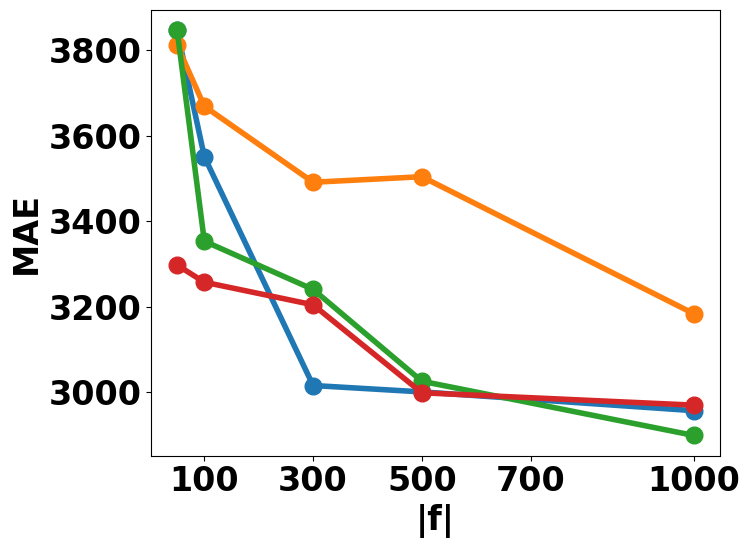}
    \caption{Poverty}
    \label{fig:poverty}
  \end{subfigure}
  \hfill
  \begin{subfigure}[b]{0.19\textwidth}
    \centering
    \includegraphics[width=\textwidth]{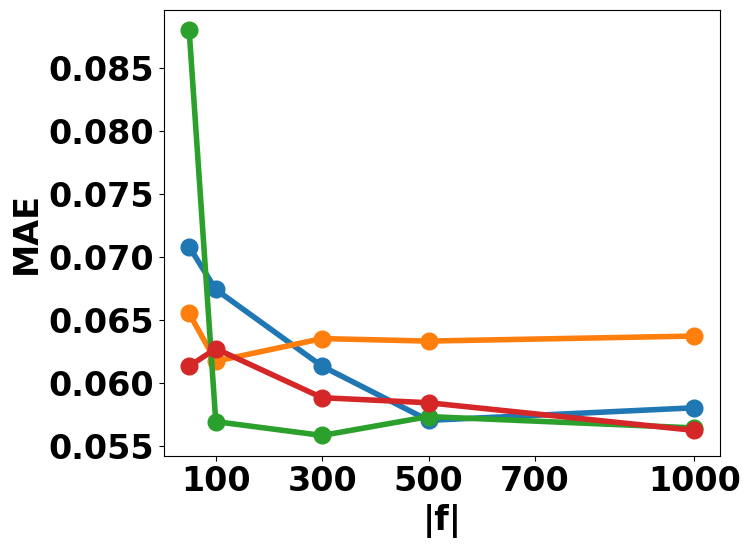}
    \caption{Air}
    \label{fig:air}
  \end{subfigure}
  \caption{Effectiveness of \method’s feature importance estimation.}
  \label{fig:fi_exp}
\end{figure*}

Overall, \method's LSTM-based $IQ$ estimator achieves the best (lowest) MSE compared to all baselines across three different datasets, as shown in Figure~\ref{fig:iq_budget_exp}. The random estimator performs the worst (highest) MSE. Notably, our method consistently outperforms both Production and Random methods, which empirically verifies the usefulness of the proposed features for our $IQ$ estimator. With the increasing number (from 50 to 1000) of the training join paths $|tp|$, our $IQ$ estimator produces more accurate estimations. Note that 1000 training feature-path pairs only account for approximately 0.1\% of all possible pairs in the smallest dataset (i.e., DonorsChoose). We also observe that random data selection strategy demonstrates competitive performance. It is expected as all test sets are randomly drawn from three datasets, respectively.

We further employ the saliency method~\cite{simonyan2014saliency} to measure the importance of each input feature to the $IQ$ model, by assessing their respective gradient concerning the predicted $IQ$ value. As shown in Figure~\ref{fig:saliency}, the $IQ$ model benefits the most from Pearson correlation and variance, and the other features also contribute to the $IQ$ to varying degrees with different datasets. This confirms that the chosen features are essential to a high-quality $IQ$ estimator in \method.

% var-kl-pc-ent-tran-mi [donor]
% pc-mi-kl-var-tran-ent [transaction]
% var-pc-kl-mi-ent-tran [school]
% pc-kl-var-mi-ent-tran [poverty]
% pc-var-ent-mi-kl-tran [air]

% Donor:
% [1.1753420e-05 1.6325233e-05 1.0518902e-05 1.6917644e-05 4.4940145e-05 3.3025968e-05]

% Transaction:
% [1.9779543e-05 3.5574463e-05 2.4388308e-05 2.8663368e-05 3.4238576e-05 2.6278967e-05]

% School:
% [9.0005060e-06 1.8796127e-05 9.7591865e-06 1.7916800e-05 1.2947814e-05 6.5230852e-06]

% Poverty:
% [3.6442293e-06 9.8520741e-06 4.7330309e-06 9.9644158e-06 1.4538429e-05 8.3979403e-06]

% Air:
%  [6.7254118e-06 2.0809610e-05 1.5396070e-05 1.0779730e-05 3.0809868e-05 1.5196910e-05]

\subsubsection{Effectiveness of \method's Feature Importance Estimator.}

Second, we evaluate the accuracy of our $FI$ estimator. The ground truth is established by materializing the test join paths and computing the corresponding $FI$ scores using AutoGluon. We measure and report the MAE between the estimated values and the ground truth values. We also vary the size of the training set from 50 to 1000. 
% Note that the size of the training set affects the quality of the IQ as shown in the previous experiments as well as the $FI$ estimator since the estimation is made using representatives path as references of which the exact $FI$ score are computed. 
To make sense of the MAE of our $FI$ estimator, we compare it against three alternative designs: (1) a random sampling method that randomly choose $|f|$ destination features for estimating $FI$. We ensure that at least one sample is drawn from each cluster; (2) a top centroid-based method that allocate budgets uniformly across clusters. Within each cluster, we select the top central features; and (3) a top cover method that equally distributes budgets across clusters. The objective within each cluster is to identify points that minimize the \textit{coverage radius}. 
% Given the NP-hard nature of this problem \cite{}, we employ a greedy approach: iteratively selecting unchosen points with the maximum distance from the closest selected point.

Overall, as shown in Figure~\ref{fig:fi_exp}, we observe that \method's estimation quality (i.e., lower MAE) improves with an increasing number of data points collected from feature clusters. \method outperforms the other baselines on \textsf{School}, \textsf{Fraud Detection} and \textsf{Air} datasets, and its performance on \textsf{DonorsChoose} and \textsf{Poverty} is also very close to the best performing \textit{top cover} method. When the given budget is small (i.e., $|f|\leq 100)$, the random sampling method is slightly more effective on \textsf{DonorsChoose} and \textsf{School} datasets. In fact, this is consistent with our observation in the above integration quality evaluation. Hence, one could take a hybrid approach that randomly samples the first 50 or 100 feature-path pairs and then adapts to the proposed data selection strategy when establishing the $FI$ estimator and the $IQ$ model.

\subsubsection{Effectiveness and Efficiency of \method's Search Algorithm.}

Third, we evaluate the effectiveness and efficiency of \method's search algorithm. We report the final Accuracy or F1 as well as the compute time spent on search. To make sense of the measurements, we compare \method's search algorithm against two baselines: (1) an exhaustive method that uses BFS to exhaustively search and estimate all possible combinations of destination features and integration paths; and (2) a greedy method that chooses features with highest estimated $FI$ for candidate features. For each candidate feature, it further selects the path with the fewest number of joins (i.e., shortest integration path). If more than one integration paths are the shortest, the greedy method picks the one with the highest estimated $IQ$. In this experiment, we vary the feature budget from 1 to 10, to be consistent with Table~\ref{tab.main_exp}.

\begin{table*}[htbp]
\centering
\resizebox{1\linewidth}{!}{\begin{tabular}{c||c|c|c|c|c|c||c|c|c|c|c|c||c|c|c|c|c|c}
\toprule
\multirow{1}{*}{\textbf{Methods}} & \multicolumn{6}{c||}{\textsf{Exhuasive}} & \multicolumn{6}{c||}{\textsf{Greedy}} & \multicolumn{6}{c}{\textsf{\method}} \\
\hline
\hline
\multirow{2}{*}{\textbf{Datasets}} & \multicolumn{2}{c|}{$|B|$=1} & \multicolumn{2}{c|}{$|B|$=5} & \multicolumn{2}{c||}{$|B|$=10} & \multicolumn{2}{c|}{$|B|$=1} & \multicolumn{2}{c|}{$|B|$=5} & \multicolumn{2}{c||}{$|B|$=10} & \multicolumn{2}{c|}{$|B|$=1} & \multicolumn{2}{c|}{$|B|$=5} & \multicolumn{2}{c}{$|B|$=10} \\
\cline{2-19}
& Score & Time & Score & Time & Score & Time & Score & Time & Score & Time & Score & Time & Score & Time & Score & Time  & Score & Time  \\
\hline
School & 0.823 & 45.33 & 0.880 & 45.53 & 0.885 & 45.69 & 0.720 & 3.02 & 0.801 & 3.08 & 0.805 & 3.16 & 0.823 & 8.046 & 0.891 & 8.194 & 0.880 & 8.392   \\
\hline
DonorsChoose & 0.822 & 71.99 & 0.954 & 72.21 & 0.961 & 72.52 & 0.855 & 1.77 & 0.925 & 1.88 & 0.955 & 2.05 & 0.822 & 32.51 & 0.954 & 32.70 & 0.961 & 32.97    \\
\hline
Fraud Detection & 0.403 & 61.54 & 0.513 & 61.78 & 0.538 & 61.80 & 0.403 & 3.55 & 0.492 & 3.62 & 0.478 & 3.76 & 0.403 & 32.58 & 0.513 & 32.66 & 0.538 & 32.81 \\
\hline
Poverty & 8222.34 & 36.66 & 7739.44 & 36.75 & 7473.87  & 36.85 & 8222.34 & 1.34 & 7885.65 & 1.39 & 7716.36 & 1.43 & 8222.34 & 16.15 & 7322.44 & 16.20 & 7182.38 & 16.26 \\
\hline
Air &  1.044 & 16.79 & 0.7619 & 16.92 & 0.7158 & 17.15 & 1.090 & 2.06 & 0.766 & 2.20 & 0.732 & 2.33 & 1.101 & 7.19 & 0.762 & 7.32 & 0.715 & 7.53  \\
\bottomrule
\end{tabular}}
\caption{Effectiveness and efficiency of \method's search algorithm (Time in seconds).}
\label{tab.search_exp}
% \vspace{-15pt}
\end{table*}

\begin{figure*}
  \centering
  \begin{subfigure}[b]{0.195\textwidth}
    \centering
    \includegraphics[width=\textwidth]{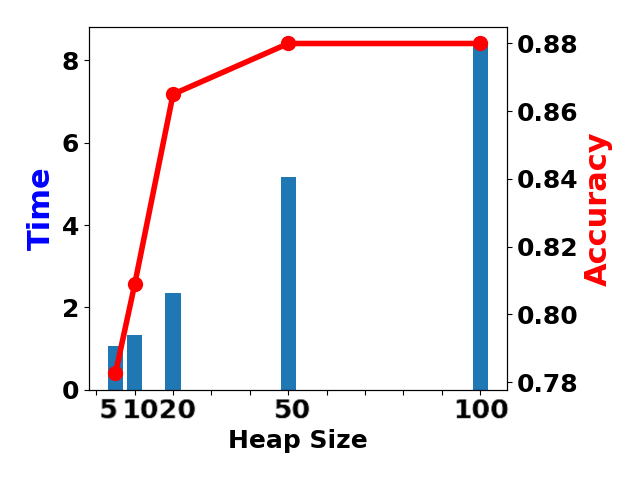}
    \caption{School}
    \label{fig:school}
  \end{subfigure}
  \hfill
  \begin{subfigure}[b]{0.195\textwidth}
    \centering
    \includegraphics[width=\textwidth]{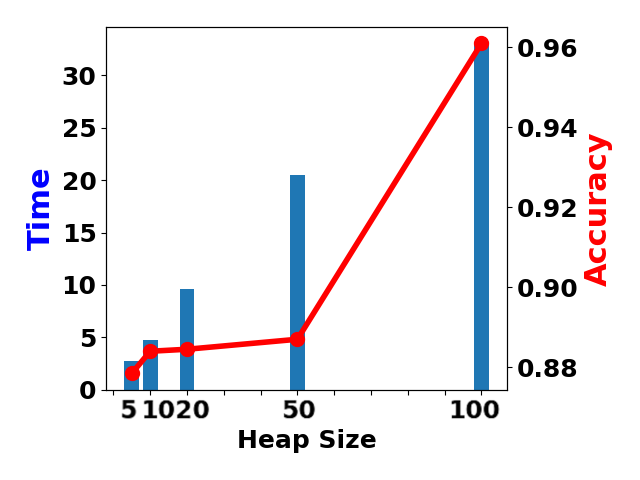}
    \caption{DonorsChoose}
    \label{fig:donor}
  \end{subfigure}
  \hfill
  \begin{subfigure}[b]{0.195\textwidth}
    \centering
    \includegraphics[width=\textwidth]{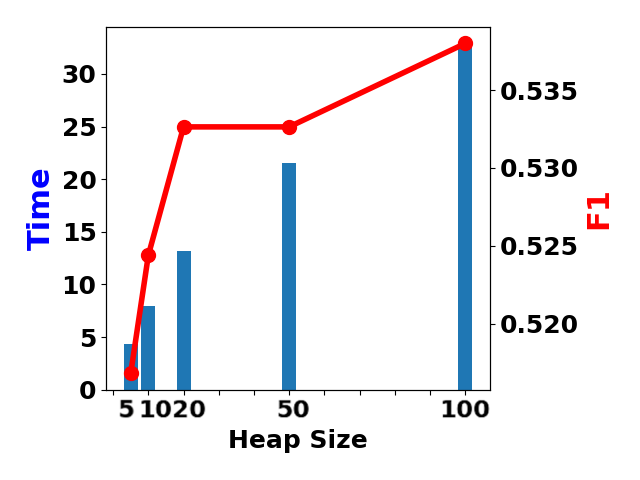}
    \caption{Fraud Detection}
  \end{subfigure}
  \hfill
  \begin{subfigure}[b]{0.195\textwidth}
    \centering
    \includegraphics[width=\textwidth]{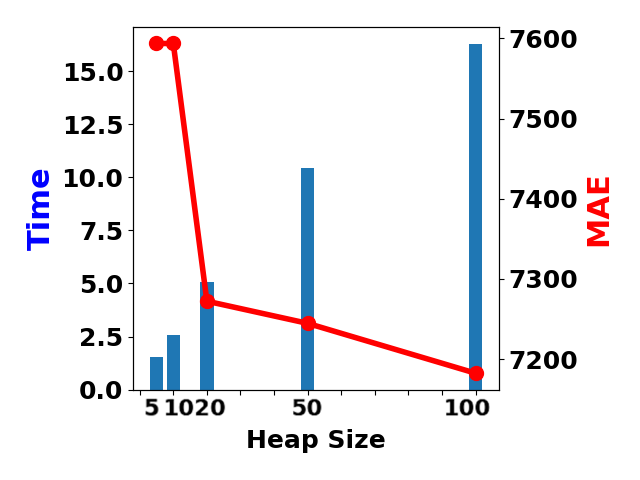}
    \caption{Poverty}
  \end{subfigure}
  \hfill
  \begin{subfigure}[b]{0.195\textwidth}
    \centering
    \includegraphics[width=\textwidth]{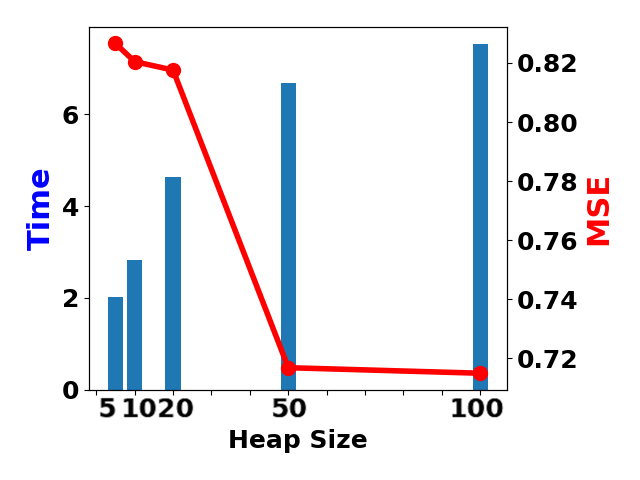}
    \caption{Air}
  \end{subfigure}
  \vspace{-10pt}
  \caption{Varying heap size for \method's search algorithm.}
  \label{fig:heapsize}
\end{figure*}

As shown in Table~\ref{tab.search_exp}, \method's search algorithm significantly outperforms the greedy algorithm and achieves similar performance to the exhaustive algorithm in Accuracy or F1. Regarding the efficiency, \method's search algorithm also achieves 2-5 times speedup compared to the exhaustive algorithm. Moreover, \method's compute time, on average, increases 2\% when increasing the feature budget from 1 to 10. This shows that \method's search algorithm is scalable to a large number of features allowed for augmentation.

We further evaluate the performance of \method's search algorithm by varying heap sizes $H$ from 5 to 100 and report the final model performance and the search compute time. As shown in Figure~\ref{fig:heapsize}, the model performance shows solid improvements with an increasing heap size, while the latency only increases sub-linearly. This showcases that \method's search and refinement algorithms are effective and efficient.

\subsection{End-to-End Latency}

\small
\begin{table*}[htbp]
\centering
\resizebox{1\linewidth}{!}{\begin{tabular}{c||c|c||c|c||c|c||c|c||c|c}
\toprule
\textbf{Datasets} & \multicolumn{2}{c||}{\textsf{School}} & \multicolumn{2}{c||}{\textsf{DonorsChoose}} & \multicolumn{2}{c||}{\textsf{Fraud Detection}} & \multicolumn{2}{c||}{\textsf{Poverty}} & \multicolumn{2}{c}{\textsf{Air}} \\
\hline
\textbf{Methods} & Actual Score & Time & Actual Score & Time & Actual Score & Time & Actual Score & Time & Actual Score & Time \\
\hline
\hline
DFS         & 0.692 & 53  & 0.854 & 140  & 0.084 & 95  & 13077.66 & 184  & 0.9425 & 121
\\
\hline
ARDA        & 0.794 & 1,517  & 0.901 & 1,707  & 0.450 & 921  & 12164.23 & 3,155  & 0.7614 & 3,003 \\
\hline
AutoFeature & 0.723 & 3,488  & 0.896 & 3,403  & 0.162 & 3,942  & 13077.66 & 9,268  & 1.2016 & 4,747 \\
\hline
METAM       & 0.816 & 1,723  & 0.820 & 1,120  & 0.464 & 1,912  & 12786.82 & 1,651  & 0.7618 & 2,249 \\
\hline
\method     & 0.880 & 1,406  & 0.961 & 1,602  & 0.540 & 1,872  & 7182.38 & 1,059  & 0.7154 & 1,488  \\
\bottomrule
\end{tabular}}
\caption{End-to-end latency (Time in seconds).}
\label{tab:e2e_exp}
% \vspace{-20pt}
\end{table*}

In Table~\ref{tab:e2e_exp}, we report the end-to-end latency of \method and other baselines over three datasets. The feature budget is set to 10 to be consistent with Table~\ref{tab.main_exp}. Specifically, \method is 2.1\% and 18.4\% faster than METAM on \textsf{Fraud Detection} and \textsf{School} datasets, and 6.2\% faster than ARDA on \textsf{DonorsChoose} dataset. This shows that \method not only delivers the best model performance but also has lower latency compared to the best performing baselines.

% \textcolor{red}{todo - break e2e latency into two parts, exploration (lstm training dominant) and exploitation (search algo).}

% \input{sections/6_exp_xiao}
\section{Related Work}
\label{sec:related}

\para{Feature Augmentation.} 
Automatic feature augmentation has been intensively studied and shown effective for various ML tasks including classification, regression, clustering, etc. Kumar et al.~\cite{10.1145/2882903.2882952} focus on the problem of when to avoid primary key-foreign key join without sacrificing the performance of the model. Deep Feature Synthesis (DFS)~\cite{dfs} is one of the early work in generating features for relational datasets. DFS follows joinable relationships in the data to a base table, and then sequentially applies mathematical functions along that path to create the final features. ARDA~\cite{arda} is a feature augmentation framework that leverages existing data discovery tools~\cite{aurum} to score the candidate tables and uses a heuristic algorithm to select features. Hence, the performance of ARDA heavily relies on the scores given by Aurum, which are not always accurate as they are not model-aware. 

Recently, AutoFeature~\cite{autof} leverages reinforcement learning (RL) for feature augmentation. It uses either a multi-armed bandits or a branch Deep Q Networks to choose between exploring features that could lead to performance improvement and exploiting features that are rarely selected. METAM~\cite{metam} is a goal-oriented framework that also utilizes the multi-arm bandit method for feature discovery and augmentation to achieve the desired performance of a given task. It uses data properties, utility functions from downstream tasks and the given candidate dataset to drive the discovery and augmentation process. Coreset selection~\cite{10.14778/3561261.3561267,DBLP:journals/pacmmod/ChaiL0FM0L023} aims to select a high-quality coreset without materializing the augmented table and uses weighted gradients of the selected coreset to approximate the full gradient of the entire train dataset.

% This is different from the problem we are trying to solve. On the one hand, their goal is to avoid unnecessary joins and to keep the performance of the model. However, our goal is to augment useful features and to improve the performance of the model.

\para{Feature Selection.}
Feature selection~\cite{featureselectionsurvey2014,featureselectiondata2017} aims to identify a subset of the most relevant features from a large dataset for a specific ML task, thereby reducing feature dimensionality. In contrast, feature augmentation involves increasing the number of features to enhance the model's predictive capability. The ranking-based methods evaluate the importance of each feature using statistical metrics like correlation coefficient or mutual information with the labels~\cite{guyon2003introduction} and then select the top relevant features. However, ranking-based methods may yield a sub-optimal subset with highly correlated variables, where a smaller subset would suffice. 
% Additionally, it risks discarding features that, while not individually informative, become significant when combined with others. 

Sequential selection and heuristic search algorithms optimize a utility gain objective function for feature subset evaluation. Due to the NP-hard nature of evaluating $2^N$ subsets, these methods find sub-optimal subsets heuristically. Sequential algorithms~\citep{floating1994} either add or remove features from an initial empty (or full) set until the utility gain is maximized, using a criterion to incrementally enhance the utility gain  with fewer features. Heuristic algorithms~\citep{goldberg2013genetic}, on the other hand, explore different subsets, either by searching within a space or generating solutions to optimize the utility gain. More recently, Cherepanova et al.~\cite{cherepanova2023performance} propose an input-gradient-based analogue of Lasso for neural networks that outperforms classical feature selection methods on challenging problems such as selecting from corrupted or second-order features in the tabular deep learning setting. Liu et al.~\cite{featureselectionreinforce2023} reformulate feature selection with a RL framework by regarding each feature as an agent.
% In addition, a GMM-based generative rectified sampling strategy is introduced to accelerate the convergence of multi-agent reinforcement learning.

\para{Table Discovery.}
Table discovery systems~\cite{aurum,zhu2019josie,DBLP:conf/sigmod/ZhangI20,DBLP:journals/pvldb/Dong0NEO23,khatiwada2023santos,hu2023automatic,DBLP:conf/sigmod/Fan00M23} find joinable and unionable tables from data repositories to boost downstream applications, such as related table discovery, table QA, and ML models. Modern techniques~\cite{DBLP:conf/sigmod/ZhangI20,DBLP:conf/icde/GongZGF23} allow users to interactively search useful joinable datasets by specifying requirements in the form of data profiles or examples. However, without the information of ML models, it is difficult to identify which tables and join paths are useful to the downstream ML models.
\section{Conclusion}
\label{sec:conclusion}

In this paper, we propose a framework \method for automatic feature augmentation. \method tackles the intractable problem of feature augmentation by decomposing it into feature importance and integration quality. We design a lightweight clustering-based method for feature importance estimation and an LSTM-based integration quality model. We further introduce an efficient feature path search algorithm that exploits both estimated feature importance and integration quality to identify high-quality features and their optimized join paths for integration under a given budget. Our experiments on three public datasets demonstrate the effectiveness of \method.
% \input{sections/9_appendix}

%\clearpage
\balance
\bibliographystyle{ACM-Reference-Format}
\bibliography{afp}

\end{document}